\pdfoutput=1
\documentclass[sigconf,nonacm]{acmart}
\setcopyright{none}
\usepackage{natbib}
\usepackage{dsfont}
\usepackage{xspace}
\makeatletter
\let\@authorsaddresses\@empty
\makeatother
\usepackage[linesnumbered,lined,ruled]{algorithm2e}
\DontPrintSemicolon
\SetKwFor{RepTimes}{Repeat}{~times:}{end}
\SetKwProg{function}{Function}{:}{}
\SetKwProg{parfor}{parallel for}{\:do}{end}
\usepackage{array} 
\usepackage{graphicx}

\usepackage{pgfplots}
\pgfplotsset{compat=1.17}
\usepackage{mathtools}
\usepackage[shortlabels, inline]{enumitem}
\usepackage{color}
\usepackage{tikz}
\SetKwFor{RepTimes}{Repeat}{~times:}{end}

\usepackage{amsmath}
\usepackage{amsthm}
\usepackage{graphicx,epsfig}
\usepackage{caption}
\usepackage[noend]{algpseudocode}

\usepackage{hyperref}
\usepackage{cleveref}
\hypersetup{
    colorlinks=true,
    linktoc=all,
    allcolors=blue,
}

\SetArgSty{textrm}
\SetKwProg{Fn}{Function}{}{}

\DeclareMathOperator{\poly}{polylog}

\newcommand{\E}{\mathbf{E}}

\numberwithin{equation}{section}

\newtheorem{theorem}{Theorem}[section]
\newtheorem{lemma}{Lemma}[section]

\newtheorem{definition}{Definition}[section]

\newtheorem{corollary}{Corollary}[section]

\newcommand{\whp}{\emph{whp}\xspace}

\newcommand{\bucket}{\mathsf{bucket}}

\begin{document}

\title{High Probability Work Efficient Parallel Algorithms}

\author{Chase Hutton}
\affiliation{
    \institution{University of Maryland}
    \country{United States}
}

\author{Adam Melrod}
\affiliation{
    \institution{Harvard University}
    \country{United States}
}

\begin{abstract}
Randomized parallel algorithms for many fundamental problems achieve optimal
linear work in expectation, but upgrading this guarantee to hold with high
probability (\whp) remains a recurring theoretical challenge.  In this paper,
we address this gap for several core parallel primitives.

First, we present the first parallel semisort algorithm achieving $O(n)$ work
\whp\ and $O(\poly n)$ depth \whp, improving upon the $O(n)$ expected work
bound of Gu et al \cite{GSSB}.  Our analysis introduces new concentration arguments based
on simple tabulation hashing and tail bounds for weighted sums of geometric
random variables.  As a corollary, we obtain an integer sorting algorithm for
keys in $[n]$ matching the same bounds.

Second, we introduce a general framework for boosting randomized parallel
graph algorithms from expected to high-probability linear work.  The framework
applies to \emph{locally extendable} problems---those admitting a
deterministic procedure that extends a solution across a graph cut in work
proportional to the cut size.  We combine this with a \emph{culled balanced
partition} scheme: an iterative culling phase removes a polylogarithmic number
of high-degree vertices, after which the remaining graph admits a balanced
random vertex partition \whp\ via a bounded-differences argument.  Applying
work-inefficient \whp\ subroutines to the small pieces and deterministic
extension across cuts yields overall linear work \whp.  We instantiate this
framework to obtain $O(m)$ work \whp\ and polylogarithmic depth \whp\
algorithms for $(\Delta+1)$-vertex coloring and maximal independent set.
\end{abstract}

\maketitle

\section{Introduction}

Randomized parallel algorithms are a cornerstone of modern algorithm design, enabling simple and elegant solutions for a wide variety of problems. In the work-depth model~\cite{jaja}, the efficiency of a parallel algorithm is measured by its \emph{work} (total number of operations) and its \emph{depth} (longest chain of sequential dependencies). Ideally, the work of a parallel algorithm should match the best sequential time complexity, while the depth should be polylogarithmic, enabling efficient parallel execution. For many fundamental problems, randomization provides an attractive path to achieving these goals: algorithms such as Luby's Maximal Independent Set (MIS)~\cite{luby86} and simple randomized coloring~\cite{johansson99, luby-coloring, hutton-melrod} admit parallel implementations with optimal linear work and polylogarithmic depth.

However, for many of these algorithms, the linear work guarantee holds only \emph{in expectation}. While expected linear work is a useful guarantee, it leaves open the possibility of rare but costly executions. A stronger and more desirable guarantee is that the work is $O(n)$ (or $O(m)$ for graph problems) \emph{with high probability} (\whp), that is, with probability $1 - n^{-\Omega(1)}$. Upgrading from expected to high-probability work bounds is a recurring theoretical challenge: the sources of randomness that make these algorithms simple---random sampling, random hashing often introduce dependencies and variance that are difficult to control tightly across the entire execution.

\paragraph{Semisorting.}
Semisorting is the problem of reordering an array of $n$ records so that records with equal keys are contiguous, without requiring distinct keys to appear in any particular order. As a relaxation of sorting that captures the common need to group records by key, semisort is widely used in
the design and analysis of parallel algorithms \cite{Acar2019BatchDynamicConnectivity,Acar2020BatchDynamicTrees, Anderson2020BatchIncrementalMST,
Blelloch2020RandomizedIncrementalJACM, Blelloch2016RadiusStepping,Dhulipala2019LowLatencyGraphStreaming,dhulipala-gbbs,Dhulipala2020ConnectItPVLDB, Dhulipala2020SemiAsymmetricNVRAMPVLDB, Dong2021EfficientSteppingSPAA,Dong2023ParallelBiconnectivityPPOPP,Liu2022BatchDynamicKCoreSPAA, Shi2021ParallelCliqueCounting,Shi2020ButterflyAPOCS, Shun2020PracticalHypergraphPPOPP,Tseng2019EulerTourTreesALENEX}. Gu, Shun, Sun, and Blelloch~\cite{GSSB} give a top-down parallel semisort with
$O(n)$ expected work and $O(\log n)$ depth \whp. Despite its utility, no prior algorithm achieves $O(n)$ work with high probability.

\paragraph{Graph problems.}
Several classical parallel graph algorithms exhibit a similar gap between expected and high-probability work bounds. Luby's celebrated MIS algorithm~\cite{luby86} can be implemented in $O(m)$ expected work and $O(\log^2 n)$ depth \whp. 
For $(\Delta+1)$-vertex coloring, a folklore randomized algorithm---in which each uncolored vertex independently selects a uniformly random available color and keeps it if no neighbor chose the same color---achieves $O(m)$ expected work and $O(\log n)$ depth \whp. To the best of our knowledge, no prior work establishes high-probability linear work bounds for $(\Delta+1)$-coloring.

\paragraph{Our contributions.}
In this paper, we address the gap between expected and high-probability work bounds for several fundamental parallel primitives. We present the first parallel semisort algorithm that achieves $O(n)$ work \whp\ and $O(\poly n)$ depth \whp. Our algorithm follows the top-down framework of Gu et al.~\cite{GSSB}, but introduces new concentration arguments---leveraging simple tabulation hashing and tail bounds for weighted sums of geometric random variables---to upgrade the work bound from expectation to high probability. As a direct corollary, we obtain an unstable integer sorting algorithm for keys in $[n]$ with $O(n)$ work \whp\ and $O(\poly n)$ depth \whp.

Second, we introduce a framework for boosting randomized parallel graph algorithms from expected to high-probability linear work. The framework applies to problems that admit a natural \emph{deterministic extendability} property: given a solution on a subgraph, one can deterministically extend it across a cut using work proportional to the cut size. We combine this with a \emph{culled balanced partition} scheme that decomposes the graph into small balanced pieces via iterative vertex culling and random partitioning. By applying work-inefficient \whp\ algorithms to the small pieces and the deterministic extender to the cuts, the overall work remains linear \whp. We apply this framework to obtain, to the best of our knowledge, the first $O(m)$ \whp\ work algorithm for $(\Delta+1)$-coloring, with $O(\poly n)$ depth \whp. We additionally give an $O(m)$ \whp\ work, $O(\poly n)$ depth \whp\ algorithm for MIS based on Luby's algorithm. We note that Blelloch et al.~\cite{BFS-greedy-mis} previously achieved this for MIS via a different approach.

\section{Preliminaries}\label{sec:prelim}

\subsection{Model and primitives}
We use the work-depth model with arbitrary forking for analyzing parallel
algorithms~\cite{jaja}.  A set of threads operates on a shared
memory, each supporting standard RAM operations plus a \emph{fork} instruction
that spawns child threads.  Memory may be stack-allocated (private to the
allocating thread) or heap-allocated (shared).  The \emph{work} of an algorithm
is the total number of instructions executed, and the \emph{depth} is the length of the longest chain of dependent instructions.

Our algorithms make use of the following standard primitives, each
operating on an array of $n$ elements:
\begin{itemize}
    \item \textbf{Scan.} Given an array $A$ and an associative operator
    $\oplus$, a scan computes the sequence of all prefix sums of $A$
    under $\oplus$. This requires $O(|A|)$ work (assuming $\oplus$ takes
    $O(1)$ work) and $O(\log |A|)$ depth.

    \item \textbf{Reduce.} Given an array $A$ and an associative binary
    function $f$, a reduction computes $f$ applied across all elements
    of $A$. This requires $O(|A|)$ work (assuming $f$ takes $O(1)$ work)
    and $O(\log |A|)$ depth.

    \item \textbf{Partition.} Given an array $A$ and a predicate $P$, a
    partition reorders $A$ so that all elements satisfying $P$ appear
    before those that do not. Using scans, this requires $O(|A|)$ work
    and $O(\log |A|)$ depth.

    \item \textbf{Filter.} Given an array $A$ and a predicate $P$, a
    filter produces a new array consisting exactly of those $a \in A$
    for which $P(a)$ holds, preserving their original order. This
    requires $O(|A|)$ work and $O(\log |A|)$ depth.

    \item \textbf{Comparison sort.} Given an array $A$ of comparable
    elements, a comparison sort produces a sorted permutation of $A$.
    Using parallel merge sort~\cite{cole-merge-sort}, this requires
    $O(|A| \log |A|)$ work and $O(\log |A|)$ depth.
\end{itemize}

\subsection{Concentration bounds}\label{sec:prelim-concentration}

\subsubsection{Chernoff bounds.}
We recall the standard Chernoff bounds for sums of independent Bernoulli
variables.

\begin{theorem}\label{thm:chernoff}
    Let $X = \sum_{i=1}^n X_i$, where $X_i, i = 1,\dots,n$, are independent random variables valued in $\{0,1\}$. Let $\mu = \E[X]$. Then for $\delta \geq 0$,
    \[\Pr[X \geq (1 + \delta)\mu] \leq \exp\left(-\frac{\delta^2 \mu}{2 + \delta}\right),\]
    and for $\delta \in [0,1]$,
    \[\Pr[X \leq (1 - \delta)\mu] \leq \exp\left(-\frac{\delta^2 \mu}{2}\right).\]
\end{theorem}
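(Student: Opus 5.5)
We prove both tails by the exponential moment method. Write $p_i = \Pr[X_i = 1]$, so $\mu = \sum_i p_i$. For any $t > 0$, Markov's inequality applied to $e^{tX}$ gives $\Pr[X \geq a] \leq e^{-ta}\,\E[e^{tX}]$. By independence, $\E[e^{tX}] = \prod_i \E[e^{tX_i}]$. Each factor satisfies $\E[e^{tX_i}] = 1 + p_i(e^t - 1) \leq \exp(p_i(e^t-1))$, using $1+x \leq e^x$. Hence $\E[e^{tX}] \leq \exp(\mu(e^t - 1))$.

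\textbf{Upper tail.} Take $a = (1+\delta)\mu$ and choose the optimizer $t = \ln(1+\delta)$. This yields the classical form
\[\Pr[X \geq (1+\delta)\mu] \leq \left(\frac{e^{\delta}}{(1+\delta)^{1+\delta}}\right)^{\mu} = \exp\bigl(-\mu\,[(1+\delta)\ln(1+\delta) - \delta]\bigr).\]
It then suffices to prove the elementary inequality
\[(1+\delta)\ln(1+\delta) - \delta \;\geq\; \frac{\delta^2}{2+\delta} \quad \text{for all } \delta \geq 0.\]
This is the one step needing care, and I would handle it by differentiation. It is enough to show $\ln(1+\delta) \geq \frac{2\delta}{2+\delta}$, because then
\[(1+\delta)\ln(1+\delta) - \delta \geq \frac{2\delta(1+\delta)}{2+\delta} - \delta = \frac{\delta^2}{2+\delta}.\]
For the auxiliary inequality, let $h(\delta) = \ln(1+\delta) - \frac{2\delta}{2+\delta}$. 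Then $h(0) = 0$ and
\[h'(\delta) = \frac{1}{1+\delta} - \frac{4}{(2+\delta)^2} = \frac{\delta^2}{(1+\delta)(2+\delta)^2} \geq 0,\]
so $h(\delta) \geq 0$ for all $\delta \geq 0$.

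\textbf{Lower tail.} Apply the same method to $-X$. For $t > 0$,
\[\Pr[X \leq (1-\delta)\mu] \leq e^{t(1-\delta)\mu}\,\E[e^{-tX}] \leq \exp\bigl(\mu(e^{-t}-1) + t(1-\delta)\mu\bigr).\]
For $\delta \in [0,1)$, set $t = -\ln(1-\delta)$. This gives the bound $\exp\bigl(-\mu\,[\delta + (1-\delta)\ln(1-\delta)]\bigr)$. The case $\delta = 1$ follows by taking the limit, or directly: $\Pr[X=0] = \prod_i (1-p_i) \leq e^{-\mu} \leq e^{-\mu/2}$.

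It remains to show $g(\delta) = \delta + (1-\delta)\ln(1-\delta) - \delta^2/2 \geq 0$ on $[0,1)$. We have $g(0) = 0$ and $g'(\delta) = -\ln(1-\delta) - \delta \geq 0$, since $-\ln(1-\delta) = \sum_{k \geq 1} \delta^k/k \geq \delta$. Hence $g \geq 0$ on $[0,1)$, which gives the second bound.

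The main obstacle is the upper-tail inequality $(1+\delta)\ln(1+\delta) - \delta \geq \delta^2/(2+\delta)$; once reduced to $\ln(1+\delta) \geq 2\delta/(2+\delta)$ and verified by differentiation as above, the rest is routine.
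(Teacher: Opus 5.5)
Your proof is correct and complete. The paper gives no proof of this theorem; it simply recalls the standard Chernoff bounds. Your exponential-moment argument is exactly the textbook derivation that this citation stands for. You bound the MGF by $\exp(\mu(e^{\pm t}-1))$, optimize $t$, and then verify $\ln(1+\delta)\ge 2\delta/(2+\delta)$ and $\delta+(1-\delta)\ln(1-\delta)\ge\delta^2/2$ by differentiation.

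One small point: at $\delta=0$ your choices $t=\ln(1+\delta)$ and $t=-\ln(1-\delta)$ give $t=0$, which is outside your stated range $t>0$. In that case the claimed bound is $1$ and holds trivially. Alternatively, Markov's inequality with $t\ge 0$ already covers it.
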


\subsubsection{Sums of geometric random variables}
We state two concentration results for sums of i.i.d.\ geometric random variables.

\begin{theorem}\label{thm:geom-concentration}
Let $G_1,\dots,G_r$ be independent with $G_i\sim\mathsf{Ge}(p)$ for some $p\in(0,1]$,
and let $G=\sum_{i=1}^r G_i$. Then for every $\lambda\ge 1$,
\[
\Pr\!\left[G\ge \lambda\cdot\E[G]\right]
\le \exp\!\left(-\frac{(\lambda-1)^2}{2\lambda}\,r\right).
\]
\end{theorem}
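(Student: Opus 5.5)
The plan is to reduce the tail of the geometric sum to the lower tail of a binomial and then apply the second inequality of Theorem~\ref{thm:chernoff}. I use the convention that $\mathsf{Ge}(p)$ counts trials up to and including the first success, so $\E[G]=r/p$.

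\textbf{Coupling with Bernoulli trials.} Realize $G$ on an infinite sequence of independent $\mathrm{Bernoulli}(p)$ trials: $G$ is the index of the $r$-th success. Set $t=\lambda r/p$ and $N=\lceil t\rceil-1$. Since $G$ is integer-valued, $G\ge t$ holds iff $G>N$. This happens iff the first $N$ trials contain at most $r-1$ successes. Hence
\[
\Pr[G\ge t]=\Pr[X\le r-1],\qquad X\sim\mathrm{Bin}(N,p),
\]
with $\mu:=\E[X]=pN\ge p(t-1)=\lambda r-p\ge \lambda r-1$.

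\textbf{Applying Chernoff.} I rewrite the lower-tail bound of Theorem~\ref{thm:chernoff} in additive form: for $0\le s\le\mu$, take $\delta=s/\mu$ to get $\Pr[X\le \mu-s]\le \exp(-s^2/(2\mu))$. Apply this with $s=\mu-(r-1)$. This $s$ is nonnegative because $\mu\ge\lambda r-1\ge r-1$, and $s\le\mu$ trivially. The result is
\[
\Pr[X\le r-1]\le \exp\!\left(-\frac{(\mu-(r-1))^2}{2\mu}\right).
\]
If $N=0$, then $\mu=0$ and $X=0$, so the event is certain. Since $N=0$ forces $\lambda r\le p$, which for $\lambda\ge 1$ and $r\ge 1$ means $\lambda=r=p=1$, the claimed bound is then $e^0=1$ and is trivially true. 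The same holds whenever $\lambda=1$.

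\textbf{Monotonicity and the rounding loss.} The main obstacle is that rounding $t$ to an integer loses up to $1$ in the mean, so we cannot simply set $\mu=\lambda r$. I handle this with a monotonicity argument. For fixed $a\ge 0$, the function $f(x)=(x-a)^2/(2x)$ is increasing on $x\ge a$, since $f'(x)=(x^2-a^2)/(2x^2)\ge 0$ there. With $a=r-1$ and $\mu\ge \lambda r-1\ge a$, this gives
\[
f(\mu)\ge f(\lambda r-1)=\frac{(\lambda r-r)^2}{2(\lambda r-1)}\ge \frac{(\lambda-1)^2r^2}{2\lambda r}=\frac{(\lambda-1)^2}{2\lambda}\,r,
\]
where the last step uses that the denominator only grows when $\lambda r-1$ is replaced by $\lambda r$.

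Combining the three steps gives $\Pr[G\ge\lambda\E[G]]\le\exp\bigl(-\tfrac{(\lambda-1)^2}{2\lambda}r\bigr)$, as claimed. The point of shifting the threshold from $r$ to $r-1$ together with the mean from $\lambda r$ to $\lambda r-1$ is that the gap $\mu-(r-1)$ stays at least $(\lambda-1)r$, so the rounding loss in the mean is exactly absorbed.
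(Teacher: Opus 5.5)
Your proof is correct and follows the same route as the paper: realize $G$ as the index of the $r$-th success in independent Bernoulli$(p)$ trials, turn the upper tail of $G$ into the lower tail of a binomial count, and apply the Chernoff lower tail. Your version is more careful than the paper's, which uses $\mathrm{Bin}(\lambda r/p,\,p)$ without noting that $\lambda r/p$ need not be an integer, and bounds $\Pr[G>t]$ rather than $\Pr[G\ge t]$; your choice $N=\lceil t\rceil-1$, the threshold $r-1$, and the monotonicity of $(x-a)^2/(2x)$ close exactly that gap.
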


\begin{proof}
Let $t:=\lambda \E[G]=\lambda r/p$. Consider $t$ independent Bernoulli$(p)$
trials and let $Y\sim\mathrm{Bin}(t,p)$ be the number of successes. The event
$\{G > t\}$ is equivalent to $\{Y < r\}$. Since $\E[Y]=tp=\lambda r$, we
have
\[
\Pr[G>t]=\Pr\!\left[Y < r\right]
=\Pr\!\left[Y\le (1-\delta)\E[Y]\right],
\]
with $\delta:=1-1/\lambda\in[0,1)$. The Chernoff lower tail
(Theorem~\ref{thm:chernoff}) gives
$\Pr[Y \le (1-\delta)\E[Y]] \le \exp(-\delta^2 \lambda r / 2)
= \exp(-(\lambda-1)^2 r / (2\lambda))$.
\end{proof}

\begin{theorem}\label{thm:weighted-geom}
Let $G_1,\dots,G_r$ be independent with $G_i\sim\mathsf{Ge}(1/2)$.
Let $w_1,\dots,w_r\ge 0$ and define
\[
W_1:=\sum_{i=1}^r w_i,\qquad
W_2:=\sum_{i=1}^r w_i^2,\qquad
W_\infty:=\max_{i\in[r]} w_i.
\]
Then for every $t\ge 0$,
\[
\Pr\!\left[\sum_{i=1}^r w_i G_i \ge 2W_1+t\right]
\le
\exp\!\left(-\min\left\{\frac{t^2}{16W_2},\ \frac{t}{8W_\infty}\right\}\right).
\]
\end{theorem}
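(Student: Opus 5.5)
I would prove Theorem~\ref{thm:weighted-geom} by the Chernoff (moment generating function) method applied to the centered variables $X_i := G_i - 2$. Here $\mathsf{Ge}(1/2)$ is supported on $\{1,2,\dots\}$ with mean $2$, so $\E[\sum_i w_i G_i] = 2W_1$. The target is a Bernstein-type inequality, with the Gaussian term $t^2/W_2$ and the exponential term $t/W_\infty$, so the plan is to show $X_i$ is sub-exponential with explicit constants. For $\theta \ge 0$ and $\theta w_i$ small, Markov's inequality gives
\[
\Pr\!\left[\sum_i w_i X_i \ge t\right] \le e^{-\theta t}\prod_i \E\!\left[e^{\theta w_i X_i}\right].
\]

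\textbf{Step 1: MGF bound for a single geometric.} For $s < \ln 2$,
\[
\E\!\left[e^{sG}\right] = \frac{e^{s}/2}{1 - e^{s}/2}, \qquad\text{hence}\qquad \E\!\left[e^{sX}\right] = \frac{e^{-s}}{2 - e^{s}}.
\]
I would show that for $0 \le s \le 1/4$ we have $\ln \E[e^{sX}] \le 4 s^2$. Write $\phi(s) = -s - \ln(2 - e^s)$. Then $\phi(0) = 0$, $\phi'(0) = 0$, and
\[
\phi''(s) = \frac{2e^s}{(2 - e^s)^2},
\]
which on $[0,1/4]$ is at most $2e^{1/4}/(2 - e^{1/4})^2 \approx 4.1$. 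The constant may need to be adjusted slightly, e.g.\ to $\phi(s) \le 4s^2$ on a range like $s \le 1/8$, so that the final constants $16$ and $8$ come out. Taylor's theorem then gives $\phi(s) \le \tfrac{1}{2}\max\phi''\cdot s^2 \le c s^2$ with $c \approx 2$--$4$.

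\textbf{Step 2: combine and optimize $\theta$.} Independence gives
\[
\Pr\!\left[\sum_i w_i X_i \ge t\right] \le \exp\!\left(-\theta t + c\,\theta^2 W_2\right)
\]
for all $0 \le \theta \le s_0 / W_\infty$, where $s_0$ is the admissible range from Step 1. This is the standard two-case choice:
\begin{itemize}
\item If $t/(2cW_2) \le s_0/W_\infty$, take $\theta = t/(2cW_2)$, giving $\exp(-t^2/(4cW_2))$.
\item Otherwise take $\theta = s_0/W_\infty$. The case hypothesis gives $c\,\theta^2 W_2 \le \theta t/2$, so the bound is $\exp(-\theta t/2) = \exp(-s_0 t/(2W_\infty))$.
\end{itemize}
With $c = 4$ and $s_0 = 1/4$ this yields exactly $\exp(-\min\{t^2/(16W_2),\, t/(8W_\infty)\})$. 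The edge cases $W_\infty = 0$ and $t = 0$ are trivial.

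\textbf{Main obstacle.} The only delicate step is Step 1: pinning down the constants so that the bound $\phi(s) \le 4s^2$ holds on $[0,1/4]$. I would verify this via the bound on $\phi''$: since $2 - e^{1/4} \approx 0.716$, we get $\phi'' \le 2(1.284)/0.513 \approx 5.0 \le 8$, so $\phi(s) \le 4s^2$ holds comfortably.
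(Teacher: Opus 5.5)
Your proposal is correct and follows the paper's proof essentially step for step. It uses the same moment generating function of the centered variable $G-2$, the same bound $\log\E[e^{s(G-2)}]\le 4s^2$ on $[0,1/4]$, and the same choice $\theta=\min\{t/(8W_2),\,1/(4W_\infty)\}$ with the identical two-case analysis. Your Taylor-remainder check of the $4s^2$ bound, which the paper only asserts, is valid, since $\phi''(s)\le 2e^{1/4}/(2-e^{1/4})^2\approx 5.0\le 8$ on $[0,1/4]$; your earlier estimate of $\approx 4.1$ is a slip, and the suggested retreat to $s\le 1/8$ is unnecessary.
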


\begin{proof}
For $G\sim\mathsf{Ge}(1/2)$ and $\theta<\log 2$,
$\E[e^{\theta G}] = \frac12 e^\theta / (1 - \frac12 e^\theta)$.
Let $\psi(\theta):=\log \E[e^{\theta(G-2)}]=\log \E[e^{\theta G}]-2\theta$.
One can verify that $\psi(\theta) \leq 4\theta^2$ for $\theta \in [0,1/4]$.

Let $Y_i:=w_i(G_i-2)$ and $Y:=\sum_i Y_i$, so that
$\sum_i w_i G_i = 2W_1 + Y$.  For $c \in[0,1/(4W_\infty)]$ we have
$c w_i\le 1/4$, hence
$\E[e^{c Y_i}] = \E[e^{(c w_i)(G_i-2)}] \le \exp(4(c w_i)^2)$.
By independence,
\[
\E[e^{c Y}]
=\prod_i \E[e^{c Y_i}]
\le \exp(4c^2 W_2).
\]
Markov's inequality then yields
$\Pr[Y\ge t] \le \exp(-c t + 4c^2 W_2)$.
Choose $c:=\min\{t/(8W_2),\,1/(4W_\infty)\}$.  If $c=t/(8W_2)$, the exponent
is $-t^2/(16W_2)$.  Otherwise $c=1/(4W_\infty)$ and $t>2W_2/W_\infty$, giving
$-c t + 4c^2 W_2 = -t/(4W_\infty) + W_2/(4W_\infty^2) \le -t/(8W_\infty)$.
\end{proof}

\subsubsection{Bounded differences}

\begin{definition}
    A function $f(x_1,\dots,x_n)$ satisfies the \emph{Lipschitz property} with constants $d_i$, $i\in[n]$, if
    $|f(\mathbf{a}) - f(\mathbf{a}')| \leq d_i$
    whenever $\mathbf{a}$ and $\mathbf{a}'$ differ only in the $i$th coordinate.
\end{definition}

\begin{theorem}[McDiarmid \cite{mcdiarmid89}]\label{thm:mobd}
    If $f$ satisfies the Lipschitz property with constants $d_i$ and $\mathbf{X} = (X_1,\dots,X_n)$ are independent random variables, then
    \[\Pr[\left |{f(\mathbf{X}) - \E f(\mathbf{X})} \right| > t] \leq 2\exp\left(-\frac{2t^2}{\sum_{i \leq n} d_i^2}\right).\]
\end{theorem}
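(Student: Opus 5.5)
This is McDiarmid's bounded differences inequality, cited from \cite{mcdiarmid89}, so the plan is the standard martingale argument. Define the Doob martingale $Z_k := \E[f(\mathbf{X}) \mid X_1,\dots,X_k]$ for $k=0,\dots,n$. Then $Z_0 = \E f(\mathbf{X})$ and $Z_n = f(\mathbf{X})$, so $f(\mathbf{X}) - \E f(\mathbf{X}) = \sum_{k=1}^n D_k$ with $D_k := Z_k - Z_{k-1}$. I will bound the upper tail $\Pr[\sum_k D_k > t]$ by $\exp(-2t^2/\sum_i d_i^2)$. Applying the same bound to $-f$, which has the same Lipschitz constants, handles the lower tail, and a union bound gives the factor $2$.

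\textbf{Range of each increment.} Condition on $X_1,\dots,X_{k-1}$. Because the $X_i$ are independent, $Z_k = g_k(X_1,\dots,X_k)$ with
\[
g_k(x_1,\dots,x_k) = \E\bigl[f(x_1,\dots,x_k,X_{k+1},\dots,X_n)\bigr],
\]
and $Z_{k-1}$ is the average of $g_k(X_1,\dots,X_{k-1},x)$ over $x \sim X_k$. Set
\[
L_k := \inf_x g_k(X_1,\dots,X_{k-1},x) - Z_{k-1}, \qquad U_k := \sup_x g_k(X_1,\dots,X_{k-1},x) - Z_{k-1}.
\]
Both are determined by $X_1,\dots,X_{k-1}$, and $L_k \le D_k \le U_k$. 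The Lipschitz property in coordinate $k$ gives $U_k - L_k \le d_k$, because changing $x$ changes each inner realization of $f$ by at most $d_k$, hence also its average.

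\textbf{Hoeffding's lemma.} Apply it conditionally: if $\E[D \mid \mathcal{F}] = 0$ and $D$ lies in an $\mathcal{F}$-measurable interval of length at most $d$, then
\[
\E[e^{sD} \mid \mathcal{F}] \le e^{s^2 d^2/8}.
\]
Iterating the tower property from $k=n$ down to $k=1$ yields
\[
\E\bigl[e^{s\sum_k D_k}\bigr] \le \exp\Bigl(s^2 \sum_k d_k^2/8\Bigr).
\]
Markov's inequality then gives $\Pr[\sum_k D_k > t] \le \exp(-st + s^2\sum_k d_k^2/8)$. Choosing $s = 4t/\sum_k d_k^2$ produces $\exp(-2t^2/\sum_k d_k^2)$.

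\textbf{Main obstacle.} The main step is Hoeffding's lemma itself. For $D \in [a,b]$ with mean zero, I would use convexity to bound $e^{sD}$ by the chord between $e^{sa}$ and $e^{sb}$. Taking expectations leaves $\exp(\phi(u))$, where $u = s(b-a)$ and $\phi(u) = -pu + \log(1-p+pe^u)$ with $p = -a/(b-a)$. One checks $\phi(0) = \phi'(0) = 0$ and $\phi'' \le 1/4$, since $\phi''$ is the variance of a Bernoulli variable. Taylor's theorem then gives $\phi(u) \le u^2/8$.

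A secondary point needing care is that the interval $[L_k, U_k]$ depends on the past, so its endpoints must be $\mathcal{F}_{k-1}$-measurable. This is exactly where independence is used: it makes $g_k$ a fixed function, so the infimum and supremum are over $x$ with the first $k-1$ coordinates held fixed.
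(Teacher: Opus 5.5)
Your proof is correct. The paper gives no proof of this statement and simply cites \cite{mcdiarmid89}, and your argument is the standard proof from that source: the Doob martingale, increments confined to an $\mathcal{F}_{k-1}$-measurable interval of length $d_k$, the conditional Hoeffding lemma, the choice $s = 4t/\sum_k d_k^2$, and then $-f$ for the lower tail. You also handle correctly the one point that fixes the constant, namely using the predictable range $U_k - L_k \le d_k$ rather than the cruder $|D_k| \le d_k$, which would only give Azuma's $\exp\bigl(-t^2/(2\sum_k d_k^2)\bigr)$.
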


\subsection{Hashing}\label{sec:prelim-hashing}

We map a set of $n$ keys from a universe $U$ into $[m]$.

\subsubsection{Universal hashing}
A family $\mathcal H = \{h : U \to [m]\}$ is \emph{universal} (or \emph{2-universal}) if for every pair of distinct keys $x \neq y \in U$, $\Pr_{h \sim \mathcal{H}}[h(x) = h(y)] \leq 1/m$. Standard constructions (e.g., $h_{a,b}(x) = ((ax + b) \bmod p) \bmod m$ for a prime $p$) use $O(\log |U|)$ bits of randomness and evaluate in $O(1)$ time.

\subsubsection{Simple tabulation hashing}\label{sec:simple-tab}
Fix a constant $c$ and an alphabet $\Sigma$, and view each key as a vector
$x = (x_1, \ldots, x_c) \in \Sigma^c$.  Simple tabulation hashing draws $c$
independent random tables $T_1, \ldots, T_c : \Sigma \to \{0,1\}^w$ (where
$w = \log_2 m$) and defines
\[
h(x) \;=\; T_1[x_1] \oplus T_2[x_2] \oplus \cdots \oplus T_c[x_c].
\]
Evaluation takes $O(c) = O(1)$ table lookups and XORs.

\paragraph{Concentration.}
Despite using limited randomness, simple tabulation exhibits Chernoff-type concentration for hashing into bins.

\begin{theorem}[P\v{a}tra\c{s}cu--Thorup~\cite{tab-hasing}]\label{thm:tab-chernoff}
Consider hashing $n$ balls into $m \ge n^{1-1/(2c)}$ bins using simple tabulation on $c$ characters. Let $q$ be an additional query ball, and let $X_q$ be the number of regular balls that hash to a bin chosen as a function of $h(q)$.
Let $\mu = \E[X_q] = n/m$. Then for any constant $\gamma > 0$:
\begin{align}
\text{for all }\delta \le 1:\quad
\Pr\big[|X_q-\mu|>\delta \mu\big]
&< 2\exp\!\big(-\Omega(\delta^2\mu)\big) + m^{-\gamma}, \label{eq:tab-pt-1}\\
\text{for all }\delta=\Omega(1):\quad
\Pr\big[X_q>(1+\delta)\mu\big]
&< (1+\delta)^{-\Omega((1+\delta)\mu)} + m^{-\gamma}. \label{eq:tab-pt-2}
\end{align}
Moreover, for any $m \le n^{1-1/(2c)}$, with probability at least $1-n^{-\gamma}$, every bin receives $n/m \pm O(\sqrt{(n/m)\log^c n})$ balls.
\end{theorem}

\subsection{The placement problem}\label{sec:placement}

The \emph{placement problem} takes an input array $A$ of $k$ records, each carrying a key in $[m]$.  For each $i \in [m]$, there is a target array $B_i$ whose capacity satisfies $|B_i| \ge \alpha \cdot b_i$, where $b_i := |\{x \in A : \mathrm{key}(x) = i\}|$ and $\alpha \ge 2$ is a constant slack factor.  The goal is to place each record of key $i$ into a distinct slot of $B_i$.

\paragraph{Algorithm.}
Partition $A$ into blocks of size $d$ (the final block has size at most $2d$). In each round, each block selects one currently unplaced record and probes a uniformly random slot in its target $B_i$; if the slot is empty, it inserts the record. This can be done using a compare-and-swap operation.

\begin{lemma}\label{lem:placement-block}
Fix a block of size at most $2d$, and let $R$ denote the number of rounds until all records in the block are placed. Then for every $\lambda \ge 1$,
\[
\Pr\!\left[R > \lambda \cdot \frac{2d}{p}\right]
\le \exp\!\left(-\frac{(\lambda-1)^2}{2\lambda} \cdot 2d\right)
= \exp(-\Omega(\lambda d)).
\]
\end{lemma}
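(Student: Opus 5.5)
The plan is to show that $R$ is stochastically dominated by a sum of $2d$ independent $\mathsf{Ge}(p)$ variables, and then to apply Theorem~\ref{thm:geom-concentration} with $r=2d$. Here I take $p := 1-1/\alpha \ge 1/2$, a lower bound on the success probability of any single probe.

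\emph{Step 1: a per-probe success bound that holds conditionally on the past.} Consider a probe by our block for a record of key $i$, made at any round and under any history of all blocks' probes. At every moment at most $b_i$ records of key $i$ exist, so at most $b_i$ slots of $B_i$ are occupied. Since $|B_i|\ge \alpha b_i$, a uniformly random slot is empty with probability at least $1-1/\alpha = p$.

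A concurrent probe from another block may hit the same empty slot in the same round. Then the compare-and-swap lets exactly one of them win. I would handle this in one of two ways. The first is to charge the loss to the occupancy bound by counting slots as "claimed" rather than "filled", since claims of key $i$ still number at most $b_i$. The second is to observe that the emptiness bound already treats every other record of key $i$ as present. Either way, the conditional success probability of our block's probe, given the entire history up to that round, is at least $p$.

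\emph{Step 2: coupling with independent geometrics.} The block has $k\le 2d$ records and places exactly one record per successful round. Hence $R$ equals the number of rounds needed to obtain $k$ successes, in a process whose per-round success probability is at least $p$ conditioned on the past. I would use the standard coupling: in each round draw $U\sim\mathrm{Unif}[0,1]$ independently, and declare "virtual success" when $U\le p$. Arrange that a virtual success implies an actual success, which is possible because the conditional success probability is at least $p$. Then $R$ is at most the time of the $k$-th virtual success, which is $\sum_{j=1}^k G_j$ with $G_j$ i.i.d.\ $\mathsf{Ge}(p)$. Padding with $2d-k$ extra independent geometrics gives $R \preceq G:=\sum_{j=1}^{2d} G_j$, where $\E[G]=2d/p$.

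\emph{Step 3: conclusion.} Theorem~\ref{thm:geom-concentration} with $r=2d$ gives
\[
\Pr[R>\lambda\cdot 2d/p]\le \Pr[G\ge \lambda\,\E[G]]\le \exp\!\left(-\tfrac{(\lambda-1)^2}{2\lambda}\cdot 2d\right).
\]
For $\lambda\ge 2$ the exponent is at least $\lambda d/4$, which is $\Omega(\lambda d)$. For $\lambda$ close to $1$ the $\Omega(\lambda d)$ form should be read as holding for $\lambda$ bounded away from $1$.

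The main obstacle is Step 1, namely justifying the conditional bound $p$ despite the dependence between blocks and concurrent compare-and-swaps. Getting it uniformly over histories is what makes the coupling in Step 2 valid.
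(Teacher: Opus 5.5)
Your proposal is correct and follows the same route as the paper's proof. You use the per-probe success bound $p = 1-1/\alpha$ from $|B_i|\ge \alpha b_i$, couple to at most $2d$ independent $\mathsf{Ge}(p)$ variables, and apply Theorem~\ref{thm:geom-concentration} with $r=2d$. Two things you spell out that the paper's sketch leaves implicit are worth keeping: the treatment of concurrent compare-and-swap contention (counting slots as claimed rather than filled), and the remark that the $\exp(-\Omega(\lambda d))$ form only holds for $\lambda$ bounded away from $1$.
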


\begin{proof}
At any time, at most $b_i$ slots of $B_i$ are occupied, so a random probe succeeds with probability at least $p := 1 - 1/\alpha \ge 1/2$.  Couple each record's number of attempts by an independent $\mathsf{Ge}(p)$ variable, and apply Theorem~\ref{thm:geom-concentration} to the sum of at most $2d$ such variables.
\end{proof}

\begin{corollary}\label{cor:placement}
With $d = \Theta(\log n)$, the placement algorithm finishes in $O(\log n)$ rounds with probability at least $1 - n^{-\Omega(1)}$. Each round performs $O(k/d)$ probes deterministically, so the total work is $O(k)$ \whp.
\end{corollary}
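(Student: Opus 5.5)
The plan is to derive Corollary~\ref{cor:placement} from Lemma~\ref{lem:placement-block} by choosing $\lambda$ as a constant, taking a union bound over the blocks, and then counting work round by round.

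\emph{Step 1: number of rounds.} Set $d = C\log n$ for a constant $C$ chosen below, and take $\lambda = 2$ in Lemma~\ref{lem:placement-block}. Since $p \ge 1/2$, a fixed block fails to finish within
\[
\lambda \cdot 2d/p \le 8d = O(\log n)
\]
rounds with probability at most $\exp(-\tfrac14\cdot 2d) = \exp(-d/2) = n^{-C/2}$.

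One subtlety is that the lemma as proved counts the total number of probes made by the block. A block probes one record per round, so the number of rounds equals the sum of the per-record attempt counts. Each attempt succeeds with probability at least $p$ regardless of the history, since at most $b_i$ of the at least $\alpha b_i$ slots of $B_i$ are ever occupied. Concurrent compare-and-swap conflicts do not change this bound. The coupling with independent $\mathsf{Ge}(p)$ variables is therefore a stochastic domination, which is all the lemma requires.

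\emph{Step 2: union bound.} There are at most $k/d + 1 \le n$ blocks. The probability that any block needs more than $8d$ rounds is therefore at most $n \cdot n^{-C/2} = n^{1-C/2}$. Choosing $C$ large makes this $n^{-\Omega(1)}$, so \whp\ all blocks finish within $O(\log n)$ rounds. This also gives the depth bound: each round is a constant-depth parallel step, plus whatever bookkeeping is needed for scheduling.

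\emph{Step 3: work.} In each round, every block that is still active does $O(1)$ work: it selects its next unplaced record, which it can track with a local pointer, and issues one probe. Each round therefore costs $O(k/d)$ work deterministically. Multiplying by the \whp\ bound of $O(d) = O(\log n)$ rounds gives total work $O(k/d \cdot d) = O(k)$ \whp.

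The only point needing care is ensuring that the per-round overhead does not exceed $O(k/d)$. In particular, the algorithm must not rescan completed blocks or the whole array each round. It suffices to run all $\lceil k/d\rceil$ blocks for a fixed $8d$ rounds, with finished blocks idling at $O(1)$ cost per round. I expect this bookkeeping, together with justifying the domination coupling, to be the main (mild) obstacle; everything else is a direct instantiation of Lemma~\ref{lem:placement-block}.
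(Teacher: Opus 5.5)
Your proposal is correct and follows the same route the paper intends: instantiate Lemma~\ref{lem:placement-block} with a constant $\lambda$, take a union bound over the at most $n$ blocks, and multiply the deterministic $O(k/d)$ per-round work by $O(d)=O(\log n)$ rounds. Your choice $\lambda=2$ gives $8d$ rounds with per-block failure probability $e^{-d/2}$. Your argument that concurrent compare-and-swap conflicts still leave each probe a conditional success probability of at least $1-1/\alpha$ supplies a detail the paper leaves implicit.
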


\subsection{Graph algorithms}\label{sec:prelim-graph}

We describe two parallel graph algorithms on a graph $G = (V, E)$ with $n = |V|$ and $m = |E|$.  Each uses $O(m \log n)$ work \whp\ and $O(\log^2 n)$ depth \whp.  In Section~\ref{sec:whp-graph-algorithms} we will boost both of these algorithms to $O(m)$ work \whp\ while preserving polylogarithmic depth.

\paragraph{Maximal independent set.}
Luby's algorithm~\cite{luby86} proceeds in rounds.  In each round, every
remaining vertex $v$ independently draws a random value $r(v)$, and $v$ joins
the independent set if $r(v)$ is a local maximum among its remaining
neighbors.  All newly added vertices and their neighbors are then removed.
Because each round eliminates a constant fraction of the remaining edges in
expectation, $O(\log n)$ rounds suffice \whp.  Each round scans all remaining
edges once, so the total work is $O(m \log n)$ \whp\ and the depth is
$O(\log^2 n)$ \whp.

\paragraph{$(\Delta+1)$-vertex coloring.}
We use the folklore palette-sampling algorithm analyzed in~\cite{hutton-melrod}. The algorithm operates in rounds.  Let $S$
denote the set of currently uncolored vertices, initially $S = V$.  In each
round, every vertex $v \in S$ independently samples a color $c(v)$ uniformly
at random from its \emph{palette} $\mathcal{P}_v$, the set of colors in
$[\Delta+1]$ not used by any already-colored neighbor of $v$.  Because each
vertex has at least $\deg_S(v) + 1$ colors in its palette (where $\deg_S(v)$
counts uncolored neighbors) and at most $\deg_S(v)$ competitors, each vertex
succeeds with constant probability.  If $v$'s sampled color conflicts with a
neighbor $u \in S$ (i.e., $c(v) = c(u)$), both vertices discard their
choices.  Otherwise $v$ is colored and removed from $S$.  After $O(\log n)$
rounds $S$ is empty \whp.  Each round performs $O(m)$ work (scanning all
remaining edges) and $O(\log n)$ depth, giving $O(m \log n)$ total work \whp\
and $O(\log^2 n)$ depth \whp.
\section{Semisort}

The \emph{semisort} problem asks to permute an array of $n$ records so that
records with equal keys are contiguous.  Prior parallel algorithms for semisort achieve either
$O(n)$ expected work with $O(\log n)$ depth~\cite{GSSB,BFGS20} or $O(n)$ work
\whp\ with $O(n^{3/4})$ depth~\cite{Dong23}.  We give the first algorithm that
achieves $O(n)$ work \whp\ while maintaining polylogarithmic depth.

\subsection{Overview}\label{sec:intro}

Our algorithm follows the top-down semisort framework of Gu--Shun--Sun--Blelloch \cite{GSSB} (hereafter GSSB), which achieves $O(n)$ expected work and $O(\log n)$ depth \whp.

\subsubsection{The GSSB framework.}\label{sec:intro-framework}
Let $A$ be an input array of $n$ records with keys drawn from an arbitrary universe $U$. The GSSB algorithm consists of three main phases.

\paragraph{Phase 1: Sampling and bucketing.}
Each record is independently sampled with probability $p_s = \Theta(1/\log n)$
into a sample array $S$.  The samples are sorted to compute, for each key $x$
present in $A$, an approximate count $\sigma_x$ of the multiplicity of $x$
in $A$.  A key $x$ is declared \emph{heavy} if $\sigma_x \ge \tau$ for a
threshold $\tau = \Theta(\log n)$; the remaining records are \emph{light}.
Heavy and light records are separated into sub-arrays $H$ and $L$.  For the
light records, a hash function $h : U \to [B]$ is used to assign each record to
one of $B = \Theta(n / \log^2 n)$ \emph{light buckets}.

\paragraph{Phase 2: Placement.}
For each heavy key $x$, a destination array $D^H_x$ is allocated whose size is based on the sample count $\sigma_x$.  Similarly, for each light bucket $b \in [B]$, a destination array $D^L_b$ is allocated based on the sampled number of light records mapping to $b$.  Concretely, if $\sigma_x$ (resp.\ $\sigma_b$) denotes the count, the destination array is given size $\alpha f(\sigma_x)$ (resp.\ $\alpha f(\sigma_b)$), where $f$ is a carefully chosen allocation function (defined in~\eqref{eq:f-def}) and $\alpha \ge 2$ is a constant slack factor. Records are then distributed into their respective destination arrays using the randomized placement procedure described in section \ref{sec:placement}.

\paragraph{Phase 3: Local semisorting.}
After placement, each light destination array $D^L_b$ is packed into a
contiguous array $C_b$ of size $m_b$.  The task is now to semisort the records
within each $C_b$.  In the GSSB algorithm, this is accomplished by inserting
the records of $C_b$ into a parallel hash table and then sorting according to the hash using a deterministic radix sort. As the parallel hash table used supports $O(1)$ expected work per operation, the work per bucket is $O(m_b)$ in expectation and the total work is $O(n)$ in expectation.

\subsubsection{Our approach.}\label{sec:intro-contributions} Using standard arguments we show that Phases 1 and 2 of the GSSB algorithm can be performed using $O(n)$ work \whp. The main bulk of our technical contribution is thus concerned with upgrading the work bound for Phase 3 from expectation to high probability. We do this in two steps.

\paragraph{Step 1.} The first step concerns deriving a tight bound on the sizes of the packed light buckets $C_b$. As each distinct record hashed to bucket $b$ has multiplicity at most $O(\log^2 n)$ \whp, it suffices to bound the number of distinct records mapped to $b$. Since we hash at most $n$ distinct records into $B = \Theta(n/\log^2 n)$ buckets, the expected number of distinct records per bucket is $O(\log^2 n)$.  Under a fully random hash function, standard balls-into-bins concentration immediately gives a bound of $O(\log^2 n)$ with high probability. However, truly random hash functions require $\Omega(n \log |U|)$ bits to represent and are
impractical.

We observe that simple tabulation hashing suffices. Simple tabulation hashing, described in section \ref{sec:simple-tab}, views each key as a vector of $c = O(1)$ characters $x = (x_1, \ldots, x_c) \in \Sigma^c$ and computes $h(x) = T_1[x_1] \oplus T_2[x_2] \oplus \cdots \oplus T_c[x_c]$, using $c$ independent random look-up tables $T_i : \Sigma \to \{0,1\}^w$, for a total of $O(|\Sigma| \cdot c)$ random words. Evaluation takes $O(1)$ time. P\v{a}tra\c{s}cu and Thorup~\cite{tab-hasing} show that simple tabulation exhibits Chernoff-type per-bin concentration: when hashing $n$ balls into $m \ge n^{1-1/(2c)}$ bins, the load of any fixed bin satisfies \[\Pr[X > (1+\delta)\mu] < (1+\delta)^{-\Omega((1+\delta)\mu)} + m^{-\gamma}\] for $\delta = \Omega(1)$, where $\mu = n/m$ is the expected load. In our setting $\mu = \Theta(\log^2 n)$ and so for any fixed bucket $b$, applying this tail bound gives failure probability $\exp(-\Omega(\log^2 n)) + B^{-\gamma}$. Then, a union bound over all $B$ buckets gives that no bucket receives more than $O(\log^2 n)$ distinct records with high probability, and consequently $\max_b |C_b| = O(\log^4 n)$ with high probability.

\paragraph{Step 2.} The second step replaces the parallel hash table used by GSSB during the local semisorting phase. We do this as follows. In each light bucket $C_b$ with $m_b$ records (of which at most $D \le m_b$ are distinct), we sample a hash function $g : U \to [m_b^K]$ from a 2-universal family (where $K \ge 3$ is a constant) and sort the records of $C_b$ by their hash values using a deterministic radix sort. If $g$ induces no collision among the $D$ distinct keys, which, by universality and a union bound, happens with probability at least $1 - \binom{D}{2}/m_b^K \ge 1/2$, then the sort produces a valid semisort of $C_b$. A linear-time scan detects collisions; if one is found, we resample $g$ and repeat.

We observe that the number of rehash attempts in bucket $b$ is stochastically dominated by a geometric random variable $X_b \sim \mathsf{Ge}(1/2)$, and the total work within bucket $b$ is $O(m_b \cdot X_b)$.  Importantly, since each attempt uses an independently sampled hash function, the random variables
$\{X_b\}_{b \in [B]}$ are independent across buckets.  The total work of Phase~3 is therefore bounded by
\[
  W = O\left(\sum_{b \in [B]} m_b \cdot X_b\right),
\]
a weighted sum of independent $\mathsf{Ge}(1/2)$ random variables. To bound $W$, we use Theorem \ref{thm:weighted-geom} which provides a concentration inequality for weighted sums of i.i.d geometric random. In particular, we apply Theorem \ref{thm:geom-concentration} with $W_1 = \sum_{b} m_b = |L|$, $W_2 = \sum_b m_b^2$, $W_{\infty} = \max_b m_b$, and $t = |L|$, to get that
\[\Pr[W \geq 3|L|] \leq \exp\left(-\min\left\{\frac{|L|^2}{16W_2},\ \frac{|L|}{8W_\infty}\right\}\right).\] 
Note that by step 1 we have that $W_{\infty} = O(\log^4 n)$ \whp. Furthermore, we can bound $W_2$ as $W_2 \leq W_{\infty} \cdot W_1 = O(|L| \log^4 n)$ \whp. Thus, assuming that $|L| \geq n / \log n$, we get that 
\[\Pr[W \geq 3|L|] \leq \exp\left(-\Omega\left(\frac{|L|}{\log^4 n}\right)\right)
  = n^{-\omega(1)},
\]
yielding $W = O(|L|) = O(n)$ with high probability.



\subsubsection{Results and Organization.} 

We state our main result in the following Theorem.

\begin{theorem}\label{thm:main}
There exists a parallel semisort that, given an array of $n$
records, produces a semisorted output using $O(n)$ work and $O(\poly n)$ depth \whp.
\end{theorem}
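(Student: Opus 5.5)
We prove Theorem~\ref{thm:main} by analyzing the three phases of the GSSB framework separately, bounding each by $O(n)$ work \whp\ and polylogarithmic depth \whp, and finishing with a union bound over the $O(1)$ failure events.

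\textbf{Phases 1 and 2.} Sampling with $p_s=\Theta(1/\log n)$ gives $|S|=\Theta(n/\log n)$ \whp\ by Theorem~\ref{thm:chernoff}. Sorting $S$ by comparison sort therefore costs $O(|S|\log|S|)=O(n)$ work and $O(\log n)$ depth. Computing the counts $\sigma_x$ and splitting $A$ into $H$ and $L$ uses scans and partitions in $O(n)$ work.

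We next show that the allocation function $f$ over-estimates true multiplicities \whp. For a key (or light bucket) whose true count is $b$, the sample count satisfies $\sigma\approx p_s b$. Chernoff gives $f(\sigma)\ge b$ \whp\ whenever $b=\Omega(\log^2 n)$; this is the regime of heavy keys and buckets. A union bound over at most $n$ keys and $B$ buckets makes this hold simultaneously. For the light buckets, whose hash $h$ is simple tabulation, we use Theorem~\ref{thm:tab-chernoff} in place of independence to get the same lower bound on the sampled load.

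On the upper side, Chernoff again gives $\sum_x \alpha f(\sigma_x)+\sum_b\alpha f(\sigma_b)=O(n)$ \whp, so allocation costs $O(n)$ work. Placement is then an instance of Section~\ref{sec:placement}, and Corollary~\ref{cor:placement} gives $O(n)$ work and $O(\log n)$ rounds \whp. Packing the destination arrays uses scans.

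\textbf{Phase 3, Step 1.} Each light key has multiplicity $O(\log^2 n)$ \whp: a key with more than $C\log^2 n$ copies gets $\sigma\ge\tau$ \whp\ and is therefore heavy. We view the at most $n$ distinct light keys as balls and hash them with simple tabulation into $B=\Theta(n/\log^2 n)$ bins, so that $\mu=O(\log^2 n)$. If there are fewer than $n$ distinct keys, we pad with dummy balls, which only increases loads. Applying \eqref{eq:tab-pt-2} with constant $\delta$ and a union bound over the $B$ bins shows every bucket receives $O(\log^2 n)$ distinct keys \whp. Hence $\max_b m_b=O(\log^4 n)$ \whp.

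\textbf{Phase 3, Step 2.} Each attempt in bucket $b$ draws a fresh 2-universal $g:U\to[m_b^K]$, radix sorts by $g$ in $O(m_b)$ work and $O(\log m_b)$ depth, and checks adjacent runs for collisions in linear work. The number of attempts $X_b$ is dominated by $\mathsf{Ge}(1/2)$, and the $X_b$ are independent across buckets conditioned on the Phase 1–2 outcome. We then apply Theorem~\ref{thm:weighted-geom} with $w_b=m_b$ and $t=|L|$.

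If $|L|\ge n/\log n$, the introduction's computation gives failure probability $\exp(-\Omega(|L|/\log^4 n))=n^{-\omega(1)}$. If $|L|<n/\log n$, we instead take $t=n$; the resulting bound is $\exp(-\Omega(n/\log^4 n))$, again negligible. Either way the Phase 3 work is $O(n)$ \whp. The depth is $O(\log n\cdot\max_b X_b)$, and since $\max_b X_b=O(\log n)$ \whp\ by a union bound, the depth is $O(\log^2 n)$.

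Heavy arrays $D^H_x$ already contain a single key and need only packing. Concatenating all outputs uses a scan.

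\textbf{Main obstacle.} The delicate step is Phase 3, Step 1, which has two issues.

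\begin{enumerate}[(i)]
\item Theorem~\ref{thm:tab-chernoff} requires $m\ge n^{1-1/(2c)}$ bins. This holds since $B=n/\log^2 n$ for any constant $c\ge1$.
\item The randomness of $h$ must be decoupled from the sampling. Conditioning on the set of light keys is not independent of $h$ in general. Our plan is to bound the load of every bucket over all distinct keys of $A$, light or not, which gives a valid upper bound, and to treat the sampling independently afterwards.
\end{enumerate}

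A secondary subtlety is that the Step 2 bound must hold conditionally on the high-probability event from Step 1. We handle this by conditioning on that event, applying Theorem~\ref{thm:weighted-geom} to the fresh, independent $g$-randomness, and adding the failure probabilities.
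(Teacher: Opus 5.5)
Your proposal follows the paper's proof essentially step for step: Chernoff-based allocation via Lemma~\ref{lem:chernoff-inversion} and placement, the $O(\log^2 n)$ light-multiplicity bound times the P\v{a}tra\c{s}cu--Thorup bound on distinct keys per bucket, and Theorem~\ref{thm:weighted-geom} on the independent rehash counts. Your choice $t=n$ when $|L|<n/\log n$ is also a valid way to avoid the paper's comparison-sort fallback.

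Two small corrections. First, $f(\sigma_b)\ge N_b$ comes from the sampling coins given $h$ (Lemma~\ref{lem:chernoff-inversion}), not from Theorem~\ref{thm:tab-chernoff}, which only controls loads over the choice of $h$. Second, the $O(\log m_b)$ depth you claim for a linear-work radix sort is not standard: the paper charges $O(m_b)$ per counting-sort pass and obtains $O(\log^5 n)$. The depth is polylogarithmic either way.
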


The remainder of this section is organized as follows. Section~\ref{sec:semisort} presents the full algorithm. Section~\ref{sec:sampling} establishes the sampling and allocation guarantees that ensure Phases~1 and~2 run in linear work. Section~\ref{sec:light} carries out the bucket-size analysis and the high-probability work bound for Phase~3, which is the heart of the argument. Finally, Section~\ref{sec:putting-together} accounts for the possibility of restarts and completes the proof of Theorem \ref{thm:main}.


\subsection{Algorithm}\label{sec:semisort}

Recall $A$ is an array of $n$ records with keys in a universe $U$. The procedure is given in Algorithm~\ref{alg:semisort}.

\subsubsection{Parameters and notation.}\label{sec:semisort-notation}

Fix constants $K \ge 3$, $\alpha \ge 2$, and $c > 0$.  Let
$p_s = \Theta(1/\log n)$ be the sampling probability and
$\tau = \Theta(\log n)$ the heavy threshold.  Let
\[
B := \lceil \log_2(n/\log^2 n) \rceil
= \Theta\left(\frac{n}{\log^2 n}\right)
\]
be the number of light buckets.
See Table~\ref{table} for a summary of notation. We also recall the allocation function used by the GSSB algorithm. For $s \ge 0$, define
\begin{equation}\label{eq:f-def}
f(s) := \frac{s + c\log n + \sqrt{c^2 \log^2 n + 2sc\log n}}{p_s}.
\end{equation}

\begin{table}[t]
\centering
\begin{tabular}{|l|p{0.74\linewidth}|}
\hline
\textbf{Symbol} & \textbf{Meaning} \\
\hline
$A$, $n$ & input array and number of records \\
$H, L$ & heavy- and light-record sub-arrays \\
\hline
$p_s$ & sampling probability, $\Theta(1/\log n)$ \\
$S$ & array of sampled records \\
$\sigma_x$, $\sigma_b$ & sample count of key $x$ (resp.\ bucket $b$) in $S$ \\
$\tau$ & heavy threshold on $\sigma_x$, $\Theta(\log n)$ \\
\hline
$h$ & simple tabulation hash function $U \to [B]$  \\
$B$ & number of light buckets, $\Theta(n/\log^2 n)$ \\
$\bucket(\cdot)$ & bucket index map $L \to [B]$ via $h$ \\
\hline
$f(\cdot)$, $\alpha$ & allocation function and slack factor \\
$D^H_x$ & heavy destination array for key $x$ \\
$D^L_b$ & light destination array for bucket $b$ \\
$C_b$, $m_b$ & packed light bucket $b$ and its size $|C_b|$ \\
\hline
\end{tabular}
\caption{Notation for Section~\ref{sec:semisort}.}\label{table}
\end{table}
\begin{algorithm}[ht]
\caption{Parallel Semisort}\label{alg:semisort}
\textbf{Input:} Array $A$ of $n$ records with keys in $U$.\\
\textbf{Output:} Array $A'$ in semisorted order.
\begin{enumerate}[1.]
\item \textbf{Sampling.}
Independently sample each record $x\in A$ with probability $p_s$ and store
each sampled $x$ in an array $S$.

\item \textbf{Sample counts.}
Sort $S$ and compute the multiplicity $\sigma_x$ for each key $x$ appearing in $S$.

\item \textbf{Heavy/light partition.}
Let $H:=\{x : \sigma_x\ge \tau\}$ and $L := A\setminus H$.

\item \textbf{Heavy placement.}
\begin{enumerate}[(a)]
\item If $|H|< n / \log n$, comparison-sort $H$ and store as one segment.
\item Otherwise, for each key $x\in H$ allocate $D^H_x$ of size $\alpha f(\sigma_x)$.
      Place each record $x\in H$ into $D^H_x$ by the placement algorithm
      (Section~\ref{sec:placement}) with block size $d=\Theta(\log n)$.
      If placement does not finish within $\Theta(\log n)$ rounds,
      go to Step~1.
\end{enumerate}

\item \textbf{Light placement.}
\begin{enumerate}[(a)]
\item If $|L|<n/\log n$, comparison-sort $L$ and store as one segment.
\item Otherwise, compute a simple tabulation hash $h:U\to [B]$ and set
      $\bucket(x):=h(x)$ for each $x\in L$.
      Compute the bucket-sample counts
      $\sigma_b := |\{y\in S \cap L : \bucket(y)=b\}|$ for each $b\in[B]$,
      allocate $D^L_b$ of size $\alpha f(\sigma_b)$,
      and place each record $x\in L$ into $D^L_{\bucket(x)}$ using the
      placement algorithm with $d=\Theta(\log n)$.
      If placement does not finish within $\Theta(\log n)$ rounds,
      go to Step~1.
\end{enumerate}

\item \textbf{Local semisorting.}
For each light bucket $b$ in parallel: pack $D^L_b$ into a contiguous
array $C_b$ via a filter and let $m_b:=|C_b|$.  Then:
\begin{enumerate}[(a)]
\item Sample a 2-universal hash $g:U\to[m_b^K]$.
\item Radix-sort $C_b$ by hash values.
\item Scan adjacent records: if any pair has equal hash values but
      distinct keys (a collision), go to~(a).
\end{enumerate}

\item \textbf{Pack.}
Pack all segments into a contiguous output array $A'$ using prefix sums
and return $A'$.
\end{enumerate}
\end{algorithm}

\subsubsection{Sampling and Allocation Guarantees}\label{sec:sampling}

In this section we verify that the sampling step produces an accurate approximation of
the input distribution and that the destination arrays allocated in Phase~2 are
simultaneously large enough to hold every key's records and small enough in
total to guarantee $O(n)$ space.

Since each of the $n$ records is included independently with probability $p_s = \Theta(1/\log n)$, we have $|S| \sim \mathrm{Bin}(n, p_s)$, and the Chernoff bound (Theorem~\ref{thm:chernoff}) immediately gives $|S| = \Theta(n/\log n)$ with probability $1 - \exp(-\Omega(n/\log n))$.

The following Lemma shows that if $\widehat{W} \sim \mathrm{Bin}(W, p_s)$ is the observed sample count, then $f(\widehat{W}) \ge W$ with high probability.

\begin{lemma}\label{lem:chernoff-inversion}
Let $W \ge 0$ and $\widehat{W} \sim \mathrm{Bin}(W, p_s)$.  Then
$\Pr[f(\widehat{W}) < W] \le n^{-c}$.
\end{lemma}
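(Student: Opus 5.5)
The plan is to invert the Chernoff lower tail. Set $\mu := pW$ with $p := p_s$. The key observation is that $f(s)$ is exactly the value of $W$ at which the lower-tail threshold equals $s$. Concretely, solve $s = \mu - \sqrt{2c\mu\log n}$ for $\mu$. Writing $y := \sqrt{\mu}$, this is the quadratic $y^2 - \sqrt{2c\log n}\,y - s = 0$. Its positive root gives
\[
\mu = s + c\log n + \sqrt{c^2\log^2 n + 2sc\log n} = p\,f(s).
\]
Moreover, $g(\mu) := \mu - \sqrt{2c\mu\log n}$ is increasing for $\mu \ge c\log n/2$.

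First I dispose of the trivial cases. If $W = 0$, then $f(\widehat W) \ge 0 = W$ surely. More generally, $f(s) \ge 2c\log n/p$ for every $s \ge 0$. So if $\mu = pW \le 2c\log n$, then $f(\widehat W) \ge W$ holds deterministically.

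Now assume $\mu > 2c\log n$, which is in the increasing regime of $g$. Since $f$ is increasing in $s$, we have $f(\widehat W) < W$ if and only if $\widehat W < f^{-1}(W)$. By the inversion identity above, $f^{-1}(W) = g(pW) = \mu - \sqrt{2c\mu\log n}$. This is legitimate because $f$ maps $[0,\infty)$ bijectively onto $[2c\log n/p,\infty)$ and $W$ lies in that range. Hence
\[
\Pr[f(\widehat W) < W] \le \Pr\!\left[\widehat W \le (1-\delta)\mu\right], \qquad \delta := \sqrt{2c\log n/\mu} \in (0,1).
\]
Applying the lower tail of Theorem~\ref{thm:chernoff} bounds this by $\exp(-\delta^2\mu/2) = \exp(-c\log n) = n^{-c}$. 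If $\log$ denotes base $2$ rather than natural log, the bound only improves, or one adjusts constants.

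The main obstacle is bookkeeping rather than probability. I need to verify the quadratic inversion correctly, including choosing the right root and checking that $f$ is monotone. I also need to confirm that the regime where the Chernoff parameter would satisfy $\delta \ge 1$ is exactly the regime already handled deterministically. Once the algebraic identity $p f(g(\mu)) = \mu$ is checked, the rest is a single application of Theorem~\ref{thm:chernoff}.
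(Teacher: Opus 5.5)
Your proof is correct and takes essentially the same route as the paper. In both, the monotonicity of $f$ reduces the event to the Chernoff lower-tail event $\widehat{W} \le \mu - \sqrt{2c\mu\log n}$, and a single application of Theorem~\ref{thm:chernoff} gives $n^{-c}$. The differences are only bookkeeping. The paper takes the largest integer $s$ with $f(s)<W$ and derives $(\mu-s)^2 > 2c\mu\log n$ by squaring. You instead invert $f$ exactly via the quadratic in $\sqrt{\mu}$, and you treat explicitly the regime $\mu\le 2c\log n$ (where the event is empty) and the requirement $\delta<1$, both of which the paper leaves implicit.
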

\begin{proof}
Write $\mu := W p_s = \E[\widehat{W}]$.  Since $f$ is non-decreasing, the
event $\{f(\widehat{W}) < W\}$ is equivalent to $\{\widehat{W} \le s\}$,
where $s$ is the largest integer with $f(s) < W$.
Set $t := \mu - s > 0$.  The inequality $W > f(s)$ gives
$\mu > s + c\log n + \sqrt{c^2 \log^2 n + 2sc\log n}$, hence
$(t - c\log n)^2 > c^2 \log^2 n + 2sc\log n$, which simplifies to
$t^2 > 2c\log n \cdot \mu$.
The Chernoff lower tail (Theorem~\ref{thm:chernoff}) then yields
\[
\Pr[\widehat{W} \le s] = \Pr[\widehat{W} \le \mu - t]
\le \exp\left(-\frac{t^2}{2\mu}\right) = \exp(-c\log n) = n^{-c}. \qedhere
\]
\end{proof}

Applying this lemma via a union bound over all keys and buckets, we can control the probability that any destination array is under-allocated. Let $N_x$ denote the true multiplicity of key $x$ and $N_b := |\{x \in A : \bucket(x) = b\}|$ the true bucket size. Define the \emph{under-allocation event}
\[
E_{\mathrm{under}}
:=
\bigl(\exists\, x \in A : N_x > f(\sigma_x)\bigr)
\lor
\bigl(\exists\, b \in [B] : N_b > f(\sigma_b)\bigr).
\]
For each key $x$, the sample count $\sigma_x \sim \mathrm{Bin}(N_x, p_s)$, so Lemma~\ref{lem:chernoff-inversion} gives $\Pr[N_x > f(\sigma_x)] \le n^{-c}$; A union bound over at most $n$ keys gives failure probability at most $n^{1-c}$. The same argument applied to each of the $B \le n$ buckets, with $\sigma_b \sim \mathrm{Bin}(N_b, p_s)$, contributes another $n^{1-c}$. Thus we deduce the following Lemma.

\begin{lemma}\label{lem:no-under}
$\Pr[E_{\mathrm{under}}] \le O(n^{1-c})$.
\end{lemma}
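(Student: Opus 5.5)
The plan is a direct union bound built on Lemma~\ref{lem:chernoff-inversion}, splitting $E_{\mathrm{under}}$ into its key part and its bucket part as in the definition.

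\textbf{Key part.} Fix a distinct key $x$ occurring in $A$. Because each record is sampled independently with probability $p_s$, the count $\sigma_x$ is distributed as $\mathrm{Bin}(N_x,p_s)$. Lemma~\ref{lem:chernoff-inversion} with $W=N_x$ then gives $\Pr[N_x>f(\sigma_x)]\le n^{-c}$. There are at most $n$ distinct keys, so a union bound contributes at most $n^{1-c}$.

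\textbf{Bucket part.} This is the only step that needs care. The hash $h$ is drawn independently of the sampling coins, so I would condition on $h$ first. Once $h$ is fixed, $N_b$ is a deterministic number for each bucket $b$. The variable $\sigma_b$ counts sampled records whose key hashes to $b$, and these are exactly $N_b$ independent Bernoulli($p_s$) coins, so $\sigma_b\sim\mathrm{Bin}(N_b,p_s)$ conditionally on $h$.

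A subtlety arises because $\sigma_b$ is defined over $S\cap L$ only. I would handle this by interpreting $N_b$ as the number of records of $L$ in bucket $b$, which is the quantity actually placed.

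\begin{itemize}
\item The light/heavy split itself depends on the sample. For a fixed record of $L$, though, the sampling coins of its key's records still decide membership in $S$.
\item The cleanest fix is to count over all records in $A$ hashing to $b$, as the definition of $N_b$ literally says, and define $\sigma_b$ correspondingly.
\item Alternatively, observe that a record of $L$ in $S$ is counted precisely when sampled. Conditioned on the heavy/light partition, the remaining coins are not independent, so I would use the all-records version to avoid this issue.
\end{itemize}

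With the all-records version, Lemma~\ref{lem:chernoff-inversion} applies conditionally on $h$ to each of the $B\le n$ buckets. A union bound gives conditional failure probability at most $n^{1-c}$, and averaging over $h$ preserves this bound.

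Adding the key part and the bucket part yields $\Pr[E_{\mathrm{under}}]\le 2n^{1-c}=O(n^{1-c})$, which is the claim.
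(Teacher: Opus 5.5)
Your skeleton is the paper's: a union bound over at most $n$ keys and $B\le n$ buckets, with each term bounded by $n^{-c}$ via Lemma~\ref{lem:chernoff-inversion}. The key part is identical. Conditioning on $h$ is a real improvement, since $N_b$ depends on $h$ and the paper leaves this implicit.

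Your worry about the bucket part is also legitimate. The paper's proof just asserts $\sigma_b\sim\mathrm{Bin}(N_b,p_s)$. But in Step~5(b), $\sigma_b$ counts only samples in $S\cap L$, and $N_b$ is the number of light records in bucket $b$, since $\bucket$ is defined on $L$ only. Meanwhile $L$ is selected by the sample itself. Conditioned on which keys of bucket $b$ are light, their counts are conditioned on $\sigma_x<\tau$ and biased downward; for instance, a key with $p_sN_x=2\tau$ that ends up light has $\sigma_x<\tau$. So the sum is not binomial, and Lemma~\ref{lem:chernoff-inversion} does not apply directly.

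\textbf{The gap is in your fix.} Replacing both quantities by the all-records counts $\sigma_b':=|\{y\in S:h(y)=b\}|$ and $N_b':=|\{x\in A:h(x)=b\}|$ makes your conditional-binomial argument correct. But it bounds a different event: under-allocation for a modified Step~5(b), not the $E_{\mathrm{under}}$ built from the algorithm's $\sigma_b$. Changing only $N_b$ would not work either: with the light-only $\sigma_b$, the event is almost certain whenever a heavy key hashes to $b$. So you do have to change both, and your bullet list should commit to that rather than present options.

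You can close this in one of two ways.

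\emph{Option 1: make the modification explicit.} Allocate $D^L_b$ with size $\alpha f(\sigma_b')$, evaluating $h$ on all of $A$ at $O(n)$ cost, and check the downstream uses. On the complement of your event, the number of light records in bucket $b$ is at most $N_b'\le f(\sigma_b')$, so the placement slack still holds. Also $\sum_b\sigma_b'=|S|$, so the proof of Lemma~\ref{lem:total-space} goes through unchanged. With that stated, your $2n^{1-c}$ bound is a complete argument.

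\emph{Option 2: keep Step~5(b) and control the selection bias directly.} Write $K_b$ for the distinct keys hashing to $b$, and for $T\subseteq K_b$ set $N_T:=\sum_{x\in T}N_x$ and $\sigma_T:=\sum_{x\in T}\sigma_x$. Under-allocation of $b$ lies in $\bigcup_{T\subseteq K_b}\{N_T>f(\sigma_T)\}$, where $\sigma_T\sim\mathrm{Bin}(N_T,p_s)$ unconditionally given $h$. However, a naive union over $T$ costs a factor $2^{|K_b|}$, with $|K_b|$ up to $\Theta(\log^2 n)$. A finer argument is needed, for example a moment-generating-function bound on $\sum_{x\in K_b}\mathbf{1}[\sigma_x<\tau]\,(p_sN_x-\sigma_x)$, whose terms are independent across keys. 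It is not obvious that the exact $O(n^{1-c})$ rate survives without adjusting constants.
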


It remains to check that the total allocated space is $O(n)$. Let $\mathcal{I} := H \cup [B]$ and define
\begin{equation}\label{eq:total-space}
T := \alpha\left(\sum_{x \in H} f(\sigma_x)
+ \sum_{b \in [B]} f(\sigma_b)\right).
\end{equation}

\begin{lemma}\label{lem:total-space}
$T = \Theta(n)$ with probability $1 - n^{-\Omega(1)}$.
\end{lemma}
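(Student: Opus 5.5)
We bound $f$ pointwise and then sum. Since $\sqrt{a^2+2ab}\le a+b$ for $a,b\ge 0$, taking $a=c\log n$ and $b=s$ gives
\[
\frac{s}{p_s}\le f(s)\le \frac{2s+2c\log n}{p_s}.
\]
So $f(\sigma)=\Theta\bigl((\sigma+\log n)/p_s\bigr)=\Theta(\sigma\log n+\log^2 n)$, using $p_s=\Theta(1/\log n)$. Substituting into \eqref{eq:total-space}, we get, up to constant factors,
\[
T=\Theta\Bigl(\log n\Bigl(\sum_{x\in H}\sigma_x+\sum_{b\in[B]}\sigma_b\Bigr)+(|H_{\mathrm{keys}}|+B)\log^2 n\Bigr),
\]
where $|H_{\mathrm{keys}}|$ is the number of distinct heavy keys. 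In \eqref{eq:total-space} the sum over $x\in H$ should be read as a sum over distinct heavy keys, since one array $D^H_x$ is allocated per key.

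We handle each term in turn.

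The sample sums are at most $|S|$, because each sampled record contributes to at most one $\sigma_x$ (if heavy) or one $\sigma_b$ (if light). By the Chernoff bound (Theorem~\ref{thm:chernoff}), $|S|=\Theta(n/\log n)$ with probability $1-\exp(-\Omega(n/\log n))$. Hence the first term is $O(n)$.

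The bucket count contributes $B\log^2 n=\Theta(n)$ deterministically.

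For the heavy keys, every distinct heavy key has $\sigma_x\ge\tau$, so the number of distinct heavy keys is at most $|S|/\tau=O(n/\log^2 n)$. Their additive contribution is therefore $O(n)$.

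The upper bound $T=O(n)$ follows. The lower bound holds deterministically: since $f(s)\ge c\log n/p_s=\Omega(\log^2 n)$, the buckets alone give $T\ge \alpha B\cdot\Omega(\log^2 n)=\Omega(n)$.

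The only probabilistic input is the concentration of $|S|$, so the overall failure probability is $\exp(-\Omega(n/\log n))\le n^{-\Omega(1)}$. We do not need Lemma~\ref{lem:chernoff-inversion} here. The one point that needs care is the heavy-key overhead: it must be charged to the samples via the threshold $\tau=\Theta(\log n)$. Without the threshold, many distinct keys would each pay an additive $\Theta(\log^2 n)$ and the bound would fail.
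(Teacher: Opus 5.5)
Your proof is correct and follows the paper's outline. Both bound $f$ pointwise and charge the resulting sums to $|S|$, to the at most $|S|/\tau$ distinct heavy keys, and to the $B$ buckets, with the concentration of $|S|$ as the only probabilistic input.

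The difference is how the square-root term is handled. The paper uses $\sqrt{c^2\log^2 n+2sc\log n}\le c\log n+\sqrt{2sc\log n}$ and keeps the $\sqrt{\sigma}$ terms. It then sums them by Cauchy--Schwarz, using $\sum_{x\in H}\sqrt{\sigma_x}\le |S|/\sqrt{\tau}$ and $\sum_b\sqrt{\sigma_b}\le\sqrt{B|S|}$, which shows they are lower order and gives slightly better constants. Your bound $\sqrt{a^2+2ab}\le a+b$ absorbs them into the linear terms at the cost of a factor of $2$ on $s$. That loss is irrelevant for a $\Theta(n)$ claim, and it removes the need for Cauchy--Schwarz.

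You also make two points explicit that the paper leaves implicit. First, the heavy sum must range over distinct keys; the paper uses this tacitly when it writes $|H|\le |S|/\tau$. Second, you give the deterministic lower bound $T\ge \alpha B\cdot 2c\log n/p_s=\Omega(n)$, whereas the paper argues only the upper bound before asserting $\Theta(n)$.
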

\begin{proof}
For all $s \ge 0$, the inequality $\sqrt{c^2\log^2 n + 2sc\log n} \le c\log n + \sqrt{2sc\log n}$ gives
\begin{equation}\label{eq:f-upper}
p_s\, f(s) \le s + 2c\log n + \sqrt{2sc\log n}.
\end{equation}
We bound the heavy and light contributions to $\sum_{i \in \mathcal{I}} f(\sigma_i)$ separately.

For the heavy part, applying~\eqref{eq:f-upper} and using $\sum_{x \in H} \sigma_x \le |S|$, $|H| \le |S|/\tau$, and the Cauchy--Schwarz bound $\sum_{x \in H} \sqrt{\sigma_x} \le \sqrt{|H| \cdot \sum_x \sigma_x} \le |S|/\sqrt{\tau}$, we get $p_s \sum_{x \in H} f(\sigma_x) = O(|S|)$.

For the light part, the same argument with $\sum_{b} \sigma_b \le |S|$ and $\sum_b \sqrt{\sigma_b} \le \sqrt{B|S|}$ yields
\[
p_s \sum_{b \in [B]} f(\sigma_b)
\le |S| + 2c\log n \cdot B + \sqrt{2c\log n \cdot B |S|}.
\]
Substituting $B = \Theta(n/\log^2 n)$ and $|S| = \Theta(n/\log n)$, the right-hand side is $\Theta(n/\log n)$ \whp. Combining and dividing by $p_s = \Theta(1/\log n)$ gives $\sum_{i \in \mathcal{I}} f(\sigma_i) = \Theta(n)$, hence $T = \Theta(n)$ \whp.
\end{proof}

\subsubsection{Light-Bucket Analysis and Local Semisorting}\label{sec:light}

We now turn to the core of the argument: showing that Phase~3, the local semisorting of light buckets, runs in $O(n)$ total work with high probability. The analysis proceeds in two stages. First, we establish that every packed bucket $C_b$ has size at most $O(\log^4 n)$. Then we use this bound, together with the independence coming from the rehashing scheme, to concentrate the total work.

\paragraph{Bucket sizes.}

The size of a packed bucket $C_b$ is bounded by the product of two quantities: the number of distinct keys hashed to $b$, and the maximum multiplicity of any such key. We bound each in turn, starting with multiplicities.

If a key $x$ has $\E[\sigma_x] \ge 2\tau$, the Chernoff bound gives $\Pr[\sigma_x < \tau] \le \exp(-\Omega(\log n)) = n^{-\Omega(1)}$; a union bound over at most $n$ keys shows that \whp\ every key classified as light satisfies $N_x p_s = \E[\sigma_x] < 2\tau$ so $N_x = O(\log^2 n)$.

For the number of distinct keys per bucket, we use the fact that simple tabulation hashing admits Chernoff-type concentration. Let $n_*$ be the number of distinct keys in $L$ and $Y_b$ the number of distinct keys mapped to bucket $b$. We view the initial hashing as throwing $n_*$ balls into $B$ bins using simple tabulation on $c$ characters. The expected load of any bucket is $\mu = n_*/B = O(\log^2 n)$, and the prerequisite $B \ge n_*^{1 - 1/(2c)}$ of Theorem~\ref{thm:tab-chernoff} is satisfied for large $n$. Taking $t = \Theta(\log^2 n)$ so that $\delta := t/\mu - 1 = \Omega(1)$, the tail bound~\eqref{eq:tab-pt-2} gives
\[
\Pr[Y_b > t]
\le (1+\delta)^{-\Omega(t)} + B^{-\gamma}
= \exp(-\Omega(\log^2 n)) + B^{-\gamma}.
\]
A union bound over $B \le n$ buckets, with $\gamma \ge 3$, yields $\max_b Y_b = O(\log^2 n)$ \whp. Combining the two bounds gives:

\begin{lemma}\label{lem:light-bucket-size}
Assume $|L| \ge n/\log n$.  With probability $1 - n^{-\Omega(1)}$, every packed light bucket satisfies $|C_b| = O(\log^4 n)$.
\end{lemma}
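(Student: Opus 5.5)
The plan is to bound $|C_b|$ as the product of two quantities: the number of distinct light keys hashed to $b$, and the largest multiplicity $N_x$ among those keys. Each factor is controlled on a high-probability event, and the two events are combined by a union bound. First note that $C_b$ is the packed form of $D^L_b$. Hence $|C_b| = N_b$, which is exactly the number of records of $L$ with $\bucket(x)=b$. This equals $\sum_{x : h(x)=b} N_x$, summed over distinct light keys $x$, and so $|C_b| \le Y_b \cdot \max_{x \in L} N_x$.

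\emph{Multiplicity event.} For every key $x$ with $N_x p_s \ge 2\tau$, the Chernoff lower tail (Theorem~\ref{thm:chernoff}) with $\delta = 1/2$ gives $\Pr[\sigma_x < \tau] \le \exp(-\tau/4)$. Choosing the constant in $\tau = \Theta(\log n)$ large enough makes this at most $n^{-2}$. A union bound over the at most $n$ distinct keys then shows that, with probability $1-n^{-1}$, every key landing in $L$ has $N_x < 2\tau/p_s = O(\log^2 n)$.

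\emph{Distinct-key event.} Regard the $n_* \le n$ distinct light keys as balls thrown into $B = \Theta(n/\log^2 n)$ bins by the simple tabulation hash $h$, and fix the character count $c$ as a constant. Then $B \ge n_*^{1-1/(2c)}$ holds for large $n$, and the mean load is $\mu = n_*/B = O(\log^2 n)$. Set $t = C\log^2 n$ with $C$ large enough that $t \ge 2\mu$, so that $\delta = t/\mu - 1 = \Omega(1)$ and $1+\delta \ge 2$. Bound~\eqref{eq:tab-pt-2} then gives $\Pr[Y_b > t] \le 2^{-\Omega(t)} + B^{-\gamma} = n^{-\omega(1)} + B^{-\gamma}$.

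I expect this step to be the main obstacle, for two reasons. The first is that Theorem~\ref{thm:tab-chernoff} is phrased for the bin of a query ball, so it must be applied to a fixed bin. The second is the edge case where $\mu$ is much smaller than $\log^2 n$, for example when $n_*$ is small. In that case I would pad with dummy keys up to $\mu = \Theta(\log^2 n)$, which only increases every $Y_b$ monotonically; alternatively I would just take $t$ at least the absolute $C\log^2 n$. Choosing $\gamma = 3$ and $B = \Omega(n^{1/2})$ makes $B^{-\gamma} \le n^{-3/2}$. A union bound over the $B \le n$ bins then gives $\max_b Y_b = O(\log^2 n)$ with probability $1 - n^{-\Omega(1)}$.

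Finally, intersecting the two events gives $|C_b| \le O(\log^2 n)\cdot O(\log^2 n) = O(\log^4 n)$ for every $b$ simultaneously, with failure probability $n^{-\Omega(1)}$. The hypothesis $|L| \ge n/\log n$ is used only to ensure that we are in case 5(b), where the buckets $C_b$ exist.
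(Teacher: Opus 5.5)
Your proposal is correct and follows the paper's proof essentially step for step. You bound $|C_b|$ by the number $Y_b$ of distinct light keys hashed to $b$ times the maximum light multiplicity. You control the multiplicity with a Chernoff lower tail plus a union bound over keys, giving $N_x = O(\log^2 n)$. You control $Y_b$ with the P\v{a}tra\c{s}cu--Thorup tail bound of Theorem~\ref{thm:tab-chernoff} at $t = \Theta(\log^2 n)$, followed by a union bound over the $B$ buckets with $\gamma \ge 3$. Your extra care about the constant in $\tau$ and the case $\mu \ll \log^2 n$ fills in details the paper leaves implicit; in that case an absolute threshold $t = C\log^2 n$ gives $1+\delta \ge 2$, so the bound only improves.
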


\paragraph{Total work of local semisorting.}

We now analyze the rehashing loop in Step~(e)(iv) of Algorithm~\ref{alg:semisort}. Consider a single attempt in a bucket $C_b$ with $m_b$ records, of which $D \le m_b$ are distinct. We hash into a range of size $m_b^K$ using a 2-universal family, so by a union bound the collision probability is at most
\[
\binom{D}{2} \cdot \frac{1}{m_b^K}
\le \frac{m_b^2}{2m_b^K}
= \frac{1}{2m_b^{K-2}}
\le \frac{1}{2}
\]
for $K \ge 3$. Each attempt thus succeeds independently with probability at least $1/2$, so the number of attempts is stochastically dominated by $X_b \sim \mathsf{Ge}(1/2)$. Furthermore, the variables $\{X_b\}_{b \in [B]}$ are independent across buckets.

Each attempt in bucket $b$ performs $O(m_b)$ work (hashing, radix sort, and a linear scan for collisions), so the total work is
\[
W = O\left(\sum_{b \in [B]} m_b\, X_b\right).
\]
This is a weighted sum of independent geometric random variables, which we can concentrate using Theorem~\ref{thm:weighted-geom}. Condition on the event of Lemma~\ref{lem:light-bucket-size}, so $W_1 = \sum_b m_b = |L|$, $W_\infty = \max_b m_b = O(\log^4 n)$, and $W_2 = \sum_b m_b^2 \le W_\infty \cdot W_1 = O(|L|\log^4 n)$. Applying the theorem with $t = |L|$ gives
\[
\Pr\left[\sum_b m_b\, X_b \ge 3|L|\right]
\le \exp\left(-\Omega\left(\frac{|L|}{\log^4 n}\right)\right).
\]
Since $|L| \ge n/\log n$ by assumption, the exponent is $\Omega(n/\log^5 n) = \omega(\log n)$, so the failure probability is $n^{-\omega(1)}$. We record this as:

\begin{lemma}\label{lem:light-internal-work}
Assume $|L| \ge n/\log n$.  The total work of Step~(e)(iv) is $O(|L|)$ with probability $1 - n^{-\omega(1)}$.
\end{lemma}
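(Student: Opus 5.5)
The proof conditions on the high-probability event of Lemma~\ref{lem:light-bucket-size} and then applies Theorem~\ref{thm:weighted-geom} to the independent rehash counts. First, fix the outcome of everything before Step~6: sampling, the tabulation hash $h$, and placement. This determines the packed buckets $C_b$ and their sizes $m_b$. Let $\mathcal{E}$ be the event of Lemma~\ref{lem:light-bucket-size}, so on $\mathcal{E}$ every $m_b = O(\log^4 n)$, and $\Pr[\neg\mathcal{E}] \le n^{-\Omega(1)}$. The local semisorting randomness (the 2-universal functions $g$ drawn in each bucket) is fresh and independent of the conditioning. So, given the $m_b$, the number of attempts $A_b$ in bucket $b$ is independent across $b$.

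Next, couple each $A_b$ with $X_b \sim \mathsf{Ge}(1/2)$ so that $A_b \le X_b$ and the $X_b$ are mutually independent. Each attempt succeeds with conditional probability at least $1/2$ given the past: we need $D \le m_b$, $K \ge 3$, and $\binom{D}{2}/m_b^K \le 1/2$. The case $m_b = 1$ is trivial, since then no collision is possible. To build the coupling, draw per-attempt uniforms and declare success of the dominating trial whenever $U \le 1/2$. Each attempt costs at most $C_0 m_b$ work for a fixed constant $C_0$, covering the hash evaluation, a radix sort on keys in $[m_b^K]$ with $K$ constant passes, and the scan. Filtering $D^L_b$ into $C_b$ costs $O(|D^L_b|)$, and by Lemma~\ref{lem:total-space} this sums to $O(n)$ \whp. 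The total work of the step is therefore at most $C_0 \sum_b m_b X_b$ plus this linear term.

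Then apply Theorem~\ref{thm:weighted-geom} with $w_b = m_b$, so $W_1 = |L|$, $W_\infty = O(\log^4 n)$ and $W_2 \le W_\infty W_1$, taking $t = |L|$. The two terms in the exponent are $|L|^2/(16W_2) = \Omega(|L|/\log^4 n)$ and $|L|/(8W_\infty) = \Omega(|L|/\log^4 n)$. With $|L| \ge n/\log n$, the conditional failure probability is $\exp(-\Omega(n/\log^5 n)) = n^{-\omega(1)}$.

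The main obstacle is the probability accounting. Lemma~\ref{lem:light-bucket-size} only gives $1 - n^{-\Omega(1)}$, while the statement claims $1 - n^{-\omega(1)}$. To handle this, I would read the lemma as a conditional statement: given the bucket sizes guaranteed by the previous steps (and $W_\infty = O(\log^4 n)$), the loop's work exceeds $O(|L|)$ with probability $n^{-\omega(1)}$. The unconditional bound then picks up the additional $n^{-\Omega(1)}$ failure of $\mathcal{E}$, which Section~\ref{sec:putting-together} handles. A possible alternative is to get $W_\infty$ from the placement capacities directly. On $\neg E_{\mathrm{under}}$ we have $m_b = N_b \le f(\sigma_b)$, but this is deterministic given $\sigma_b$ and does not by itself bound $m_b$ by $\mathrm{polylog}\, n$, so I would use the conditional reading.
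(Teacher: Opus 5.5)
Your proposal is correct and is essentially the paper's argument: dominate each bucket's attempt count by an independent $\mathsf{Ge}(1/2)$ variable, then condition on the event of Lemma~\ref{lem:light-bucket-size}, which gives $W_1=|L|$, $W_\infty=O(\log^4 n)$ and $W_2\le W_\infty W_1$, and apply Theorem~\ref{thm:weighted-geom} with $t=|L|$. The paper also conditions on that $1-n^{-\Omega(1)}$ event, so your explicitly conditional reading of the $n^{-\omega(1)}$ bound is the right way to interpret the lemma.

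The one thing to drop is the filter cost. The paper charges packing $D^L_b$ into $C_b$ to the $O(n)$ packing term in the proof of Theorem~\ref{thm:main}, not to this loop. It could not be $O(|L|)$ anyway: $|D^L_b|\ge \alpha f(0)=\Theta(\log^2 n)$ for each of the $B=\Theta(n/\log^2 n)$ buckets, so the total is $\Theta(n)$ even when $|L|=n/\log n$.
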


\subsubsection{Restart Analysis and Overall Work Bound}\label{sec:putting-together}

The algorithm restarts whenever a placement phase exceeds $\Theta(\log n)$ rounds.  We now show that restarts are sufficiently rare, and that the total work summed over all iterations is $O(n)$ \whp.

Let $E_H$ (resp.\ $E_L$) denote the event that heavy (resp.\ light) placement does not finish within $O(\log n)$ rounds. Conditioned on $\neg E_{\mathrm{under}}$, every destination array has capacity at least $\alpha$ times the number of records it must receive, so the slack condition for the placement problem is satisfied. Corollary~\ref{cor:placement} then gives $\Pr[E_H \mid \neg E_{\mathrm{under}}] \le n^{-\Omega(1)}$ (and likewise for $E_L$). Together with Lemma~\ref{lem:no-under}, this yields:

\begin{lemma}\label{lem:restart-probs}
Both
$E_H$ and $E_L$ occur with probability at most $n^{-\Omega(1)}$.
\end{lemma}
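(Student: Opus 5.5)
The plan is to bound $\Pr[E_H]$ (and symmetrically $\Pr[E_L]$) by splitting on the under-allocation event. By the law of total probability,
\[
\Pr[E_H] \le \Pr[E_{\mathrm{under}}] + \Pr[E_H \mid \neg E_{\mathrm{under}}].
\]
I will bound the two terms separately.

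The first term is handled by Lemma~\ref{lem:no-under}, which gives $\Pr[E_{\mathrm{under}}] = O(n^{1-c})$. This is $n^{-\Omega(1)}$ provided the constant $c$ in the allocation function~\eqref{eq:f-def} is fixed larger than $1$, say $c \ge 2$. I will state this choice explicitly.

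For the second term, condition on $\neg E_{\mathrm{under}}$. Every heavy key then satisfies $N_x \le f(\sigma_x)$, so $|D^H_x| = \alpha f(\sigma_x) \ge \alpha N_x$. Every bucket satisfies $N_b \le f(\sigma_b)$, and the light records of bucket $b$ number at most $N_b$, so $|D^L_b| \ge \alpha \cdot |\{x\in L:\bucket(x)=b\}|$. Thus the placement instance satisfies the slack hypothesis of Section~\ref{sec:placement}. Each probe then succeeds with probability at least $1-1/\alpha \ge 1/2$, whatever the occupancy.

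Next I apply Lemma~\ref{lem:placement-block} to a single block with $d = C_0\log n$ and a constant $\lambda$, say $\lambda=2$. The probability that the block needs more than $8d = O(\log n)$ rounds is $\exp(-\Omega(d)) = n^{-\Omega(C_0)}$. A union bound over the at most $n/d \le n$ blocks then gives failure probability $n^{1-\Omega(C_0)}$, which is $n^{-\Omega(1)}$ once $C_0$ is large enough. This is exactly Corollary~\ref{cor:placement}, and taking the round cutoff $\Theta(\log n)$ in Algorithm~\ref{alg:semisort} at least $8d$ makes $E_H$ contained in this failure event.

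There is one subtlety I expect to be the main obstacle. The randomness in Lemma~\ref{lem:placement-block} (the probe locations) must be independent of the conditioning event $\neg E_{\mathrm{under}}$, which depends only on the sample and on $h$. This holds because the probes are fresh coins drawn after allocation. So I will condition on the full outcome of Steps 1--3 (and on $h$) satisfying $\neg E_{\mathrm{under}}$, apply the placement bound pointwise, and average.

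A second point is the coupling in Lemma~\ref{lem:placement-block}. The per-record success probability can depend on other blocks' concurrent insertions, but it is always at least $p$. Stochastic domination by independent $\mathsf{Ge}(p)$ variables therefore still holds via the sequential probe-by-probe coupling, since each probe's success indicator is at least a fresh $\mathrm{Bernoulli}(p)$ regardless of history.

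Finally, in the degenerate branches (a), where $|H|$ or $|L|$ is below $n/\log n$, no placement occurs. There $E_H$ or $E_L$ is empty, so the bound holds trivially. Summing the two terms gives $\Pr[E_H],\Pr[E_L] \le O(n^{1-c}) + n^{-\Omega(1)} = n^{-\Omega(1)}$.
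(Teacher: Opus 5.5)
Your proposal is correct and follows the paper's own argument. You condition on $\neg E_{\mathrm{under}}$ so that every destination array has the required $\alpha$-slack, invoke Corollary~\ref{cor:placement} for the conditional failure probability, and add the $O(n^{1-c})$ bound of Lemma~\ref{lem:no-under}. Your extra remarks fill in details the paper leaves implicit: $c>1$ must be fixed (the paper only says $c>0$), the probe randomness is fresh relative to the conditioning, and the per-block geometric domination survives concurrent CAS contention.
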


We now prove Theorem \ref{thm:main}.

\begin{proof}[Proof of Theorem \ref{thm:main}]

No restarts happen with high probability by Lemma \ref{lem:restart-probs} so it suffices to bound the work and depth assuming this.

\emph{Work.}
Sampling is $O(n)$ work. Sorting $S$ by parallel merge sort costs
$O(|S|\log|S|) = O(n)$ \whp. The heavy/light partition is $O(n)$.
For heavy records: if $|H| < n/\log n$, comparison-sorting $H$ costs
$O(|H|\log|H|) = O(n)$ work; otherwise, on the event $\neg E_H$,
placement runs for $O(\log n)$ rounds with $O(|H|/\log n)$ probes per
round, totaling $O(|H|)$.
For light records: if $|L| < n/\log n$, comparison-sorting $L$ costs
$O(|L|\log|L|) = O(n)$ work; otherwise, on the event $\neg E_L$,
placement costs $O(|L|)$ by the same argument, and
Lemma~\ref{lem:light-internal-work} bounds the local-semisorting work
by $O(|L|)$ \whp.
In both cases the total is $O(|H|+|L|) = O(n)$.
The allocated space is $T = \Theta(n)$ \whp\ by
Lemma~\ref{lem:total-space}, so packing destination arrays and
assembling the final output each cost $O(n)$.
Thus the work is at most $O(n)$ work with probability at
least $1 - n^{-\Omega(1)}$.

\emph{Depth.}
Sampling, partitioning, and the final packing step each have
$O(\log n)$ depth.  Sorting $S$ takes $O(\log n)$ depth.
For heavy records: if $|H| < n/\log n$, comparison sort has
$O(\log n)$ depth; otherwise, placement runs for $O(\log n)$ rounds,
each of $O(\log n)$ depth, contributing $O(\log^2 n)$.
For light records: if $|L| < n/\log n$, comparison sort has
$O(\log n)$ depth; otherwise, placement contributes $O(\log^2 n)$
and local semisorting dominates.  In the latter case, each rehash
attempt on a bucket of size $m_b = O(\log^4 n)$ performs a radix sort
with $K = O(1)$ passes of counting sort at base $m_b$, each of depth
$O(m_b)$, for $O(\log^4 n)$ depth per attempt.
Across all $B$ buckets, the maximum number of attempts is
$\max_b X_b$ where $X_b \sim \mathsf{Ge}(1/2)$ independently; a union
bound gives $\Pr[\max_b X_b > \Omega(\log n)] \le B \cdot 2^{-\Omega(\log n)}
\le n^{1-\Omega(1)}$, so $\max_b X_b = O(\log n)$ \whp.
The depth of local semisorting is therefore
$O(\log^4 n \cdot \log n) = O(\log^5 n)$ \whp, which dominates all
other steps.
\end{proof}

\subsection{Application: integer sorting}
\label{subsec:counting-sort}

As a direct application, our semisort yields an unstable counting sort for
integer keys.

\begin{corollary}\label{cor:counting-sort}
Given an array $A$ of $n$ records with keys in $[n]$, an unstable counting
sort of $A$ can be computed in $O(n)$ work \whp\ and $O(\poly n)$ depth
\whp.
\end{corollary}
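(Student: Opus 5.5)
The plan is to reduce integer sorting to semisorting followed by a sort of the distinct keys. Run the semisort of Theorem~\ref{thm:main} on $A$. This costs $O(n)$ work and $O(\poly n)$ depth \whp, and produces an array $A'$ in which records with equal keys are contiguous. What is left is to put the groups in increasing key order, which is where the unstable counting sort differs from a semisort.

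First I identify the group boundaries. Position $i$ of $A'$ is a boundary if $i=1$ or $\mathrm{key}(A'[i])\neq \mathrm{key}(A'[i-1])$. A filter over these positions gives an array $G$ of groups, one per distinct key $k$, each stored as the triple (key $k$, start index, length $N_k$). The lengths come from differences of consecutive boundary indices. This takes $O(n)$ work and $O(\log n)$ depth.

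Next, the groups have to be ordered by key. Because keys lie in $[n]$, I avoid comparison sorting and instead use a direct-address table $P$ of size $n+1$, initialized to zero. For each group with key $k$, in parallel, write $P[k]:=N_k$. Distinct groups have distinct keys, so the writes never conflict. Then compute an exclusive prefix sum over $P$ with a scan. Entry $P[k]$ now holds the output offset for key $k$: the number of records whose key is smaller than $k$. Both steps take $O(n)$ work and $O(\log n)$ depth.

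Finally, each group is copied to its place in the output. Every record $A'[j]$ in the group that starts at $s$ with key $k$ goes to output position $P[k]+(j-s)$. Each record needs to know its group's start, and a prefix-max scan over the boundary indices supplies this. The copy is a single parallel loop with $O(n)$ work and $O(\log n)$ depth. The output is sorted by key but not necessarily stable, which is why the corollary claims only an unstable sort.

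There is no real obstacle here. All of the probabilistic content sits inside Theorem~\ref{thm:main}, and the post-processing is deterministic, linear work, and logarithmic depth. The only point to check is that the $O(n)$-size table is legitimate, which holds because keys lie in $[n]$. Adding the costs gives $O(n)$ work and $O(\poly n)$ depth \whp, with the same failure probability as the semisort.
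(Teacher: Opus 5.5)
Your proposal is correct and follows essentially the same route as the paper: semisort via Theorem~\ref{thm:main}, scan for group boundaries, write each group's size into a key-indexed counts array, prefix-sum to get offsets, and copy each group into its interval. Your extra details (conflict-free writes because each key forms exactly one group, and a prefix-max scan to recover group starts) fill in steps the paper leaves implicit.
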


\begin{proof}
First, apply Theorem~\ref{thm:main} to $A$, producing an array $A'$
in which records sharing the same key are contiguous.  Next, scan $A'$ to
identify group boundaries (positions where the key changes) and write the size
of each group into a counts array $C[0 : n-1]$.  Finally, compute a
prefix sum over $C$ to obtain output offsets, and copy each group into the
corresponding interval of the output array.  The first step dominates,
giving $O(n)$ work \whp\ and $O(\poly n)$ depth \whp.
\end{proof}
\section{Framework for Graph Algorithms}
\label{sec:whp-graph-algorithms}

Our methods apply to graph problems that exhibit specific deterministic extendability properties. To illustrate this, consider a partition of a graph $H$ into disjoint subgraphs $H_0 \sqcup H_1$. Assuming we possess a solution for the problem on $H_0$, we require a deterministic algorithm---performing work linear in the size of the cut---facilitating the computation of a solution on $H_1$ compatible with $H_0$. Subsequently, a secondary bridging operation back to $H_0$ allows for the computation of a complete solution for $H$. 

The core insight of our approach is that some graph problems admit simple algorithms that run in $O(m \poly n)$ work \whp, alongside admitting these deterministic extenders. With these tools, if we can produce a balanced partition of $G$ into subgraphs $H_1, \dots, H_{\poly n}$, balanced such that $|H_i| = O(m/(\poly n)^2)$, we can inductively apply the work-inefficient \whp solution to the internal pieces and the deterministic extender to the cuts. This strategy ultimately yields an overall \whp linear-work algorithm for the entire graph.

\subsection{The Culled Balanced Partition}

Once the existence of the deterministic extender is established, the problem reduces to producing a balanced partition. Because not all graphs can be partitioned in this manner naively, we introduce a structural relaxation which we call a \emph{culled balanced partition}. This partition consists of a small set of removed vertices $C \subseteq V$ (where $|C| \leq \poly n$) and a balanced partition $\{V_1,\dots,V_k\}$ of the remaining induced subgraph $G - C$. Formally

\begin{definition}
\label{def:culled-balanced-partition}
A culled balanced partition of size $k$ is a tuple $(C, \{V_1,\dots,V_k\})$ where $C \subseteq V$ has size $O(\poly n)$ and $\{V_1,\dots,V_k\}$ is a partition of $V \setminus C$ such that $G[V_i]$ has $O(m/k^2)$ edges for all $i$.
\end{definition}

To construct a culled balanced partition of polylogarithmic size, we rely on the method of bounded differences, which dictates that randomly partitioning a vertex set into $\poly m$ pieces yields a balanced distribution of edges \whp, provided the graph lacks excessively high-degree vertices. Therefore, we utilize an iterative culling process to prune the graph until randomly partitioning is viable.

\begin{itemize}
    \item \textbf{Iterative Culling:} We iteratively identify and remove vertices of high degree, adding them to the culled set $C$. If no such vertices exist, we generate a random partition of the remaining vertices. The degree condition is such that at most $\poly n$ vertices are removed in a single iteration. 
    \item \textbf{Progress Guarantees:} After each round, either the degree condition is satisfied (allowing us to apply the random partitioning to achieve balance \whp), or the total number of edges in the remaining graph decreases by at least half.
    \item \textbf{Termination:} Because the edges halve in unsuccessful rounds, this iteration terminates in at most $\log n$ rounds. 
\end{itemize}

Finally, to compute the overall solution, we process the balanced partition and deterministically extend the solution to the small culled set $C$. Because $C$ is very small in size, the work required to process the culled portion contributes negligibly to the overall complexity, allowing us to conclude with a strictly linear work bound for the entire algorithm.

\subsection{Applications}

\subsubsection{\texorpdfstring{$(\Delta+1)$}{(Delta+1)}-Vertex Coloring}

We obtain a linear-work \whp\ algorithm for $(\Delta+1)$-coloring by combining
the standard palette-sampling coloring routine (as a work-inefficient \whp\
subroutine) with a deterministic extender across cuts.

\paragraph{Deterministic extender.}
Let $H$ be partitioned as $H = H_0 \sqcup H_1$, and suppose $H_0$ is already
properly colored by~$\phi$.  For each vertex $v \in V(H_1)$, define the
residual palette
\[
    \mathcal{C}(v)
    :=
    [\Delta+1] \setminus \{\phi(u) : u \in N_G(v) \cap V(H_0)\}.
\]
Computing these palettes requires scanning each cut edge exactly once, and
hence costs $O(|E(H_0, H_1)|)$ work.  Moreover,
\[
    |\mathcal{C}(v)|
    \ge
    \deg_{H_1}(v) + 1
    \qquad \text{for all } v \in V(H_1),
\]
since at most $\Delta - \deg_{H_1}(v)$ colors are removed from a palette of
size $\Delta + 1$.  Consequently, any list-coloring procedure on~$H_1$ with
palettes $\{\mathcal{C}(v)\}$ yields a proper coloring compatible with~$\phi$.

\paragraph{Algorithm.}
Let $(C,\{V_1, \dots, V_k\})$ be a culled balanced partition of~$G$ with
$k = O(\log n)$.  Write $H_i := G[V_i]$ for each piece.  The
algorithm proceeds as follows.
\begin{enumerate}
    \item For $i = 1, \dots, k$: apply the deterministic extender from the
    already-colored vertices $H_1 \sqcup \cdots \sqcup H_{i-1}$ into~$H_i$ to
    obtain residual palettes, then run the palette-sampling coloring routine
    on~$H_i$ with those palettes.
    \item Extend from $H_1 \sqcup \cdots \sqcup H_k$ to the culled set~$C$
    using the deterministic extender, and color~$C$ with the palette-sampling
    routine.
\end{enumerate}

\paragraph{Work and span.}
Write $m_i := |E(H_i)|$ for the number of internal edges of piece~$i$, and
$c_i := |E(H_1 \sqcup \cdots \sqcup H_{i-1},\, H_i)|$ for the number of cut
edges entering~$H_i$.  Denote by $W_{\mathrm{ps}}(m')$ and
$S_{\mathrm{ps}}(m')$ the work and span of the palette-sampling subroutine on
a subgraph with $m'$~edges; by assumption
$W_{\mathrm{ps}}(m') = O(m' \log n)$ \whp.

Processing piece~$H_i$ costs $O(c_i)$ work for the extender plus
$O(W_{\mathrm{ps}}(m_i))$ work for coloring.  The total work is therefore
\[
    \sum_{i=1}^{k} [O(c_i) + O(W_{\mathrm{ps}}(m_i))]
    + O(W_{\mathrm{ps}}(|C|)).
\]
Each edge of~$G$ is a cut edge for at most one piece, so
$\sum_i c_i \le m$.  Because the partition is balanced,
$m_i = O(m / k^2)$ for every~$i$ \whp, and hence
\[
    \sum_{i=1}^{k} W_{\mathrm{ps}}(m_i)
    =
    \sum_{i=1}^{k} O\left(\frac{m}{k^2} \cdot \log n\right)
    =
    O\left(\frac{m}{k} \cdot \log n\right)
    = O(m),
\]
where the last equality uses $k = \log n$.  The culled set
satisfies $|C| \le \poly n$, so its contribution is
$O(\poly^2 n) = o(m)$.  The total work is thus $O(m)$ \whp.

The pieces are processed in sequence.  The extender for piece~$i$ runs in
$O(\log n)$ span (a parallel scan of the cut edges), and the palette-sampling
routine contributes $S_{\mathrm{ps}}(m_i)$ span.  Hence the overall span is
\begin{multline*}
    \sum_{i=1}^{k} [O(\log n) + S_{\mathrm{ps}}(m_i)]
    + O(\log n) + S_{\mathrm{ps}}(|C|)\\=
    O(k \cdot S_{\mathrm{ps}}(m/k^2)) + O(k \log n).
\end{multline*}
As the palette-sampling routine achieves polylogarithmic span, this simplifies to
$O(\poly n)$.

\subsubsection{Maximal Independent Set}

We obtain a linear-work \whp algorithm for maximal independent set by combining Luby's algorithm and a deterministic extender. Luby's algorithm obtains $O(m \log n)$ work \whp, and once the deterministic extender is described, the algorithm and its analysis is identical to that for vertex coloring.

\paragraph{Deterministic extender.}

Let $H$ be partitioned as $H = H_0 \sqcup H_1$, and suppose $M_0$ is a maximal independent set for $H_0$. In parallel, we delete each vertex $v \in V(H_1)$ possessing a cut edge to an element of $M_0$ from $H_1$. Consequently, computing an MIS $M_1$ of the updated $H_1$ yields an MIS $M := M_0 \cup M_1$ for $H$.

\subsection{Constructing an culled balanced partition via culling}
\label{subsec:whp-partition-construction}

Our construction of a culled balanced partition proceeds in two stages. First, a small set of high-degree vertices is culled to enforce a certain max-degree-to-edge ratio condition; second, once this condition is enforced the remaining vertices are randomly partitioned into $k=\Theta(\poly n)$ buckets. The condition  shows that this produces a culled balanced partition.

\subsubsection{Stage I: culling high-degree vertices}
\label{subsubsec:whp-culling}

We first initialize an empty culled set $C \subseteq V$. The culling stage runs in \emph{phases}. In a phase, let $H$ denote the current graph during culling (initially $H\gets G$). Write $e(H)$ for the number of edges in $H$ and $\Delta(H)$ for the max degree of $H$. Set the removal threshold
\[
\tau(H) := \frac{e(H)}{k^4 \log n}.
\]
If $\Delta(H) > \tau$ holds, then in parallel, remove all vertices $v$ from $H$ with $\deg_H(v)>\tau/2$, and add them to the culled set $C$. If the condition does not hold, we are done with the culling step, and we set $G':=G[V\setminus C]$ be the remaining graph.

\begin{lemma}
\label{lem:whp-phase-progress}
At the end of a phase starting with $M$ edges, either $\Delta(H)\le e(H)/k^4 \log n$ holds, or the number of
edges has halved: $e(H)\le M/2$. In particular, the culling process takes $O(m)$ work and $O(\log n)$ rounds.
\end{lemma}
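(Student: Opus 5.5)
The plan is to compare the new removal threshold with the degree bound that the removal step itself guarantees. Fix a phase that starts from a graph $H$ with $e(H)=M$ edges and threshold $\tau=M/(k^4\log n)$, and suppose $\Delta(H)>\tau$, so vertices are removed. Let $H'$ be the graph after removing every vertex $v$ with $\deg_H(v)>\tau/2$.

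\textbf{Degree bound after removal.} Every vertex surviving in $H'$ had $\deg_H(v)\le \tau/2$. Deleting vertices only lowers degrees, so $\Delta(H')\le \tau/2$.

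\textbf{The dichotomy.} The next phase uses the threshold $\tau(H')=e(H')/(k^4\log n)$. There are two cases.
\begin{itemize}
\item If $e(H')\le M/2$, the edges have halved and we are done.
\item Otherwise $e(H')>M/2$. Then $\tau(H')>\tfrac{M}{2k^4\log n}=\tau/2\ge \Delta(H')$, so $H'$ satisfies $\Delta(H')\le e(H')/(k^4\log n)$ and culling stops.
\end{itemize}
No probabilistic input is needed; the argument is purely deterministic.

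\textbf{Size of the culled set.} Each removed vertex has degree more than $\tau/2$ in $H$. By the handshake bound, a phase removes at most $2e(H)/(\tau/2)=4k^4\log n$ vertices, which is polylogarithmic when $k=\Theta(\poly n)$. This is the fact Definition~\ref{def:culled-balanced-partition} needs.

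\textbf{Number of phases.} Every phase that does not terminate halves the edge count. Hence there are at most $\log_2 m+1=O(\log n)$ phases.

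\textbf{Work and depth.} Implement a phase as follows.
\begin{itemize}
\item Compute $e(H)$ and the degrees by a scan and reduce over the current edge list.
\item Mark vertices with degree above $\tau/2$.
\item Filter out the edges incident to marked vertices.
\item Filter out vertices that become isolated, so the active vertex list never exceeds $2e(H)$.
\end{itemize}
Each phase then costs $O(e(H))$ work and $O(\log n)$ depth, using the primitives of Section~\ref{sec:prelim}. The edge counts of the non-terminal phases shrink geometrically, $M_j\le m/2^j$, and the final phase costs $O(e(H))$ as well. So the total work is $O(\sum_j m/2^j)=O(m)$, plus an initial $O(n)$ pass to drop isolated vertices. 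The total depth is $O(\log^2 n)$ over $O(\log n)$ rounds.

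\textbf{Main obstacle.} The only delicate point is bookkeeping. The per-phase work must be charged to $e(H)$, not to $|V|$; otherwise the $\log n$ phases would cost $O(n\log n)$. Maintaining only non-isolated active vertices handles this.
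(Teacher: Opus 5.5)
Your proof is correct and takes essentially the same approach as the paper: survivors have degree at most $\tau/2 = M/(2k^4\log n)$, so if $e(H') > M/2$ then $\Delta(H') < e(H')/(k^4\log n)$, and otherwise the edge count has halved. Your geometric-sum argument for the $O(m)$ work bound is a step the paper's proof leaves implicit: you charge each phase to $e(H)$ by physically filtering edges and dropping isolated vertices. Your handshake bound on the culled set also correctly uses $e(H)$ rather than $m$.
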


\begin{proof}
Let $H'$ be the graph at the end of the phase with $M':=e(H')$ and $\Delta':=\Delta(H')$. If the phase ends
with $M'\le M/2$, we are done. Otherwise $M'>M/2$, and vertices violating the initial degree condition have been removed, so
$\Delta'\le \tau/2 = M/(2k^4 \log n)$. Since $M'>M/2$, we have $M<2M'$, and therefore
\[
\Delta' \le \frac{M}{2k^4 \log n} < \frac{2M'}{2k^4 \log n} = \frac{M'}{k^4 \log n}.
\]
We obtain the lemma.
\end{proof}

We also require a bound on the size of $C$.

\begin{lemma}
\label{lem:whp-culled-per-phase}
In any phase, the number of removed vertices is at most $4k^4 \log n$. Thus $|C| = O(\poly n)$, as the number of rounds is $O(\log n)$ and $k = O(\poly n)$.
\end{lemma}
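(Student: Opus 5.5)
The bound is a degree-counting (handshake) argument applied within a single phase; the bound on $|C|$ then follows by summing over phases.

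\emph{Per-phase bound.} Fix a phase that begins with current graph $H$, write $M := e(H)$, and let $\tau = M/(k^4\log n)$. Every removed vertex $v$ satisfies $\deg_H(v) > \tau/2$. Let $R$ be the set of removed vertices. Since $\sum_{v \in V(H)} \deg_H(v) = 2M$ and degrees are nonnegative,
\[
|R| \cdot \frac{\tau}{2} < \sum_{v \in R} \deg_H(v) \le 2M .
\]
Hence $|R| < 4M/\tau = 4k^4\log n$. If $\tau = 0$ there are no edges, so $\Delta(H) \le \tau$ and no phase is run; the division is therefore safe whenever a removal actually occurs.

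\emph{Bound on $|C|$.} I next bound the number of phases. Lemma~\ref{lem:whp-phase-progress} says that every phase which does not terminate culling at least halves the edge count. Every non-final phase must be non-terminating, because the stopping condition $\Delta(H) \le \tau(H)$ was not yet met at its end. So after $j$ non-final phases the graph has at most $m/2^j$ edges. Once the edge count drops below $1$, the graph is empty and the stopping test holds trivially. This gives at most $\lceil \log_2 m\rceil + 1 = O(\log n)$ phases. Summing the per-phase bound,
\[
|C| \le O(\log n)\cdot 4k^4\log n = O(k^4\log^2 n),
\]
which is $O(\poly n)$ because $k = O(\poly n)$.

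\emph{Main obstacle.} The only point needing care is the phase-count bookkeeping. Lemma~\ref{lem:whp-phase-progress} is phrased as a dichotomy, so I must check that the "degree condition holds" branch really ends culling and does not allow an unbounded run of phases without halving. This follows because the algorithm tests exactly that condition at the start of the next phase and stops if it holds.
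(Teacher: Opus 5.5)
Your proof is correct and is essentially the paper's argument. The paper compares the number of deleted edges (at least $r\tau/4$, since each edge touches at most two removed vertices) against the current edge count, which is the same handshake bound as your $\sum_{v\in R}\deg_H(v)\le 2e(H)$; your phase-count bookkeeping just spells out the $O(\log n)$ rounds the paper takes from Lemma~\ref{lem:whp-phase-progress}, and you rightly use $e(H)$ rather than the original $m$ written in the paper's proof, which is what actually gives $4k^4\log n$ in later phases.
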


\begin{proof}
Let $r$ be the number of vertices removed. Each removed vertex has degree at least $\tau/2$, so removing all such vertices deletes at least $r\tau/4$ edges (since each edge is adjacent to at most two removed vertices). The number of removed edges cannot exceed the total number of edges, so we conclude that $r\tau/4 \leq m$. This simplifies to $r \leq 4 k^4 \log n$, as desired.
\end{proof}

At the end of culling we obtain $G'=(V',E')$ with $V'=V\setminus C$, $n':=|V'|$, $m':=|E'|$, and
\[
\Delta(G') \le \frac{m'}{k^4 \log n}.
\]

\subsubsection{Stage II: random partitioning of the remaining vertices}
\label{subsubsec:whp-random-partition}

Given $G'=(V',E')$, assign each vertex of $V'$ independently and uniformly to a bucket in $[k]$, producing
a partition $V'=V_1\sqcup\cdots\sqcup V_k$ and output $(\{V_1,\dots,V_k\},C)$. For each $i\in[k]$, define
\[
X_i := e\left(G'[V_i]\right).
\]
By linearity of expectation,
\[
\E[X_i] = \frac{m'}{k^2}.
\]

\begin{lemma}
\label{lem:whp-fixed-bucket-conc}
For each fixed $i\in[k]$ and every $t>0$,
\[
\Pr\left[\left|X_i-\E[X_i]\right|>t\right]
\le 2\exp\left(-\frac{t^2}{\Delta(G')\,m'}\right).
\]
\end{lemma}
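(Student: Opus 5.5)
We will apply McDiarmid's inequality (Theorem~\ref{thm:mobd}) to $X_i$, viewed as a function of the independent bucket labels $(Z_v)_{v\in V'}$, with $Z_v$ uniform on $[k]$. Write
\[
X_i = f(Z) := \sum_{\{u,w\}\in E'} \mathds{1}[Z_u = i]\,\mathds{1}[Z_w = i].
\]
This is a deterministic function of independent coordinates, so the only remaining task is to compute Lipschitz constants.

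\emph{Lipschitz constants.} Fix $v\in V'$ and change only $Z_v$. The only terms that can change are those for edges incident to $v$ in $G'$. Each such indicator changes by at most $1$, so
\[
\bigl|f(\mathbf a)-f(\mathbf a')\bigr| \le \deg_{G'}(v) =: d_v .
\]
(When $Z_v$ moves into or out of bucket $i$, exactly the edges to neighbours already in $V_i$ toggle, so $d_v$ is a valid but possibly loose constant.)

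\emph{Sum of squares.} Next we bound $\sum_v d_v^2$. Using $d_v\le \Delta(G')$ and the handshake identity,
\[
\sum_{v\in V'} d_v^2 \le \Delta(G')\sum_v d_v = 2\,\Delta(G')\,m'.
\]
Plugging this into Theorem~\ref{thm:mobd} gives
\[
\Pr\bigl[|X_i-\E X_i|>t\bigr] \le 2\exp\!\left(-\frac{2t^2}{2\Delta(G')m'}\right) = 2\exp\!\left(-\frac{t^2}{\Delta(G')\,m'}\right),
\]
which is exactly the stated bound. The factor $2$ in McDiarmid's exponent cancels the $2$ from the handshake identity.

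\emph{Main obstacle.} The only delicate point is getting the constant to match. We must use the per-vertex constant $d_v$ rather than a uniform $\Delta(G')$ over all $n'$ vertices: the uniform choice would give $n'\Delta(G')^2$ in the denominator, which is worse than and not comparable to $\Delta(G')m'$. Beyond that, we only need to note that isolated vertices contribute $d_v=0$, and that no conditioning on the culling stage is needed, since $G'$ is fixed before the random labels are drawn.
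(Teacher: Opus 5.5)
Your proof is correct and is essentially the paper's argument: McDiarmid's inequality over the independent labels $(Z_v)_{v\in V'}$ with per-vertex Lipschitz constants $d_v \le \deg_{G'}(v)$, followed by $\sum_v d_v^2 \le \Delta(G')\sum_v \deg_{G'}(v) = 2\Delta(G')m'$. One small correction to your side remark: the uniform choice $n'\Delta(G')^2$ is always comparable to $2\Delta(G')m'$ and never smaller, since $n'\Delta(G') \ge 2m'$. So it is simply weaker, not incomparable.
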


\begin{proof}
The variable $X_i$ is a function $f_i$ of the labels $(Z_v)_{v\in V'}$, where $Z_v\in[k]$ is the bucket of $v$. Changing the label of a single vertex $v$ can affect $X_i$ only through edges
incident to $v$, hence $f_i$ is Lipschitz with constant $d_v\le \deg_{G'}(v)$. Applying
Theorem~\ref{thm:mobd} with $\sum_v d_v^2\le \sum_v \deg_{G'}(v)^2 \le \Delta(G')\sum_v \deg_{G'}(v)
=2\Delta(G')m'$ yields
\[
\Pr\left[\left|X_i-\E[X_i]\right|>t\right]
\le 2\exp\left(-\frac{2t^2}{2\Delta(G')m'}\right)
=2\exp\left(-\frac{t^2}{\Delta(G')m'}\right),
\]
as desired.
\end{proof}

Applying this lemma to $G'$ with $t = \Theta(m'/k^2)$, and using that $\Delta(G') \leq m'/(k^4 \log n)$, we obtain
\begin{multline*}
    \Pr\left[X_i = O\left(\frac{m}{k^2}\right)\right] \leq 2\exp\left(-\Omega\left(\frac{k^4 \log n}{m'} \frac{1}{m'} \frac{m'^2}{k^4}\right)\right) \\\leq 2\exp\left(-\Omega(\log n)\right) = \frac{1}{n^{\Omega(1)}}.
\end{multline*}
A union bound then allows us to conclude that $X_i = O(m/k^2)$ for all $i$ \whp. Performing the partition requires linear work and logarithmic span, so we conclude the following theorem.

\begin{theorem}
\label{thm:Balanced culled partition}
For any polylogarithmic $k$, executing Stages I and II produces a culled balanced partition $(C, \{V_1, \dots, V_k\})$ in linear work \whp and polylogarithmic span.
\end{theorem}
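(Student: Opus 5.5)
The proof assembles the lemmas of Stages I and II into three claims: correctness of the output as a culled balanced partition in the sense of Definition~\ref{def:culled-balanced-partition}, a linear work bound, and a polylogarithmic span bound.

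\emph{Correctness.} By Lemma~\ref{lem:whp-phase-progress}, every phase that does not terminate halves $e(H)$. Since $e(H)\le m\le n^2$, culling ends after $O(\log n)$ phases, and at termination $\Delta(G')\le m'/(k^4\log n)$. Lemma~\ref{lem:whp-culled-per-phase} bounds each phase's removals by $4k^4\log n$, so $|C|=O(k^4\log^2 n)$. This is polylogarithmic because $k$ is. For Stage II I apply Lemma~\ref{lem:whp-fixed-bucket-conc} with $t=m'/k^2$. Using the degree bound, the exponent is
\[
\frac{t^2}{\Delta(G')m'}\ \ge\ \frac{(m'/k^2)^2\,k^4\log n}{m'^2}=\log n .
\]
To get polynomially small failure probability, I will take the threshold constant large enough, i.e.\ $\tau=e(H)/(k^4\cdot c\log n)$. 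This gives $\Pr[X_i>2m'/k^2]\le 2n^{-c}$. A union bound over the $k$ buckets then gives $X_i\le 2m'/k^2\le 2m/k^2$ for all $i$ \whp.

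\emph{Work.} I will charge each phase to the edges it touches. A phase computes degrees, a maximum (reduce), and a filter of high-degree vertices. It then deletes their incident edges and rebuilds $H$ by filtering the edge list, for $O(e(H)+|V(H)|)$ work. The $O(|V|)$ term is handled by keeping only vertices of positive degree after the first phase. This is the main obstacle: the terminating phase and the non-halving phase do not shrink $e(H)$ geometrically. The resolution is the dichotomy in Lemma~\ref{lem:whp-phase-progress}. Every phase except the last halves the edge count, so $\sum_j e(H_j)\le 2m+m$. Initial degree computation costs $O(n+m)$. Stage II assigns labels in $O(n')$ work and classifies edges by the labels of their endpoints in $O(m')$ work; grouping edges by piece uses the semisort of Theorem~\ref{thm:main} or Corollary~\ref{cor:counting-sort} with keys in $[k]$, at $O(m)$ work \whp. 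All of this is deterministic linear work apart from the semisort's \whp\ bound.

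\emph{Span.} Each phase consists of scans, reductions, and filters of $O(\log n)$ depth. With $O(\log n)$ phases this totals $O(\log^2 n)$. Stage II adds $O(\log n)$, plus the polylogarithmic depth of the semisort. The overall span is therefore polylogarithmic.
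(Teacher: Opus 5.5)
Your proposal is correct and follows essentially the paper's route. For Stage~I you use the halving-or-terminate dichotomy of Lemma~\ref{lem:whp-phase-progress} and the per-phase bound of Lemma~\ref{lem:whp-culled-per-phase}; for Stage~II you use McDiarmid via Lemma~\ref{lem:whp-fixed-bucket-conc} with $t=\Theta(m'/k^2)$ and a union bound over the $k$ buckets. Your extra bookkeeping only makes explicit what the paper leaves implicit, namely the constant $c$ in $\tau$, the geometric-sum charging of per-phase work, and dropping isolated vertices (scaling $t$ by a large constant would give the same failure exponent without changing $\tau$).
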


\subsubsection{Organizing the graph according to the partition}

Once the culled balanced partition $(C, \{V_1, \dots, V_k\})$ is computed, we
reorganize the graph representation so that each piece $G[V_i]$ and the cut
edges between pieces can be accessed efficiently.  We apply the integer sort
from Corollary~\ref{cor:counting-sort} to the vertex array, keyed on partition
index, so that the vertices of each $V_i$ occupy a contiguous segment.  Next,
for each vertex $v \in V_i$, we sort its adjacency list by the partition index
of the neighbor, again using integer sort.  This groups each adjacency list
into a block of \emph{internal} neighbors (those in $V_i$) and a block of
\emph{cut} neighbors (those in other pieces or in~$C$), with the boundary
between them recorded as a splitting point.  The entire materialization
requires $O(m)$ work \whp\ and $O(\log^5 n)$ depth \whp, dominated by the
integer sorts.

\bibliography{refs.bib}{}
\bibliographystyle{plain}
\end{document}